% LLNCS macro package for Springer Computer Science proceedings;
% Version 2.20 of 2017/10/04
%
\documentclass[a4paper,USenglish,cleveref, autoref, thm-restate]{lipics-v2021}
\usepackage{graphicx}
\usepackage{algorithm}
\usepackage[noend]{algorithmic}
\usepackage{amsfonts}
\usepackage{amsmath}
\bibliographystyle{plainurl}% the mandatory bibstyle

\title{Optimal Bounds for Weak Consistent Digital Rays in 2D} %TODO Please add

%\titlerunning{Optimal Bounds}% for Weak Consistent Digital Rays in 2D} %TODO optional, please use if title is longer than one line

%\author{ }{ }{}{}{}
\author{Matt Gibson-Lopez}{Department of Computer Science, The University of Texas at San Antonio, San Antonio, USA }{matthew.gibson@utsa.edu}{}{}

\author{Serge Zamarripa}{Department of Computer Science, The University of Texas at San Antonio, San Antonio, USA }{sergio.zamarripa@my.utsa.edu}{}{}

%\authorrunning{ }
\authorrunning{M. Gibson-Lopez and S. Zamarripa} %TODO mandatory. First: Use abbreviated first/middle names. Second (only in severe cases): Use first author plus 'et al.'

%\Copyright{ }
\Copyright{Matt Gibson-Lopez and Serge Zamarripa} %TODO mandatory, please use full first names. LIPIcs license is "CC-BY";  http://creativecommons.org/licenses/by/3.0/

\ccsdesc{Theory of computation~Computational geometry} %TODO mandatory: Please choose ACM 2012 classifications from https://dl.acm.org/ccs/ccs_flat.cfm 

\keywords{Digital Geometry, Consistent Digital Rays} %TODO mandatory; please add comma-separated list of keywords

\category{} %optional, e.g. invited paper

\relatedversion{} %optional, e.g. full version hosted on arXiv, HAL, or other respository/website
%\relatedversiondetails[linktext={opt. text shown instead of the URL}, cite=DBLP:books/mk/GrayR93]{Classification (e.g. Full Version, Extended Version, Previous Version}{URL to related version} %linktext and cite are optional

%\supplement{}%optional, e.g. related research data, source code, ... hosted on a repository like zenodo, figshare, GitHub, ...
%\supplementdetails[linktext={opt. text shown instead of the URL}, cite=DBLP:books/mk/GrayR93, subcategory={Description, Subcategory}, swhid={Software Heritage Identifier}]{General Classification (e.g. Software, Dataset, Model, ...)}{URL to related version} %linktext, cite, and subcategory are optional

%\funding{Supported by the National Science Foundation under Grant No.~1733874.}%optional, to capture a funding statement, which applies to all authors. Please enter author specific funding statements as fifth argument of the \author macro.

%\acknowledgements{I want to thank \dots}%optional

\nolinenumbers %uncomment to disable line numbering

\hideLIPIcs  %uncomment to remove references to LIPIcs series (logo, DOI, ...), e.g. when preparing a pre-final version to be uploaded to arXiv or another public repository

%Editor-only macros:: begin (do not touch as author)%%%%%%%%%%%%%%%%%%%%%%%%%%%%%%%%%%
\EventEditors{John Q. Open and Joan R. Access}
\EventNoEds{2}
\EventLongTitle{42nd Conference on Very Important Topics (CVIT 2016)}
\EventShortTitle{CVIT 2016}
\EventAcronym{CVIT}
\EventYear{2016}
\EventDate{December 24--27, 2016}
\EventLocation{Little Whinging, United Kingdom}
\EventLogo{}
\SeriesVolume{42}
\ArticleNo{23}
%%%%%%%%%%%%%%%%%%%%%%%%%%%%%%%%%%%%%%%%%%%%%%%%%%%%%%
\begin{document}

\maketitle              % typeset the header of the contribution
\begin{abstract}
Representation of Euclidean objects in a digital space has been a focus of research for over 30 years.  Digital line segments are particularly important as other digital objects depend on their definition (e.g., digital convex objects or digital star-shaped objects).  It may be desirable for the digital line segment systems to satisfy some nice properties that their Euclidean counterparts also satisfy.  The system is a consistent digital line segment system (CDS) if it satisfies five properties, most notably the subsegment property (the intersection of any two digital line segments should be connected) and the prolongation property (any digital line segment should be able to be extended into a digital line).  It is known that any CDS must have $\Omega(\log n)$ Hausdorff distance to their Euclidean counterparts, where $n$ is the number of grid points on a segment.  In fact this lower bound even applies to consistent digital rays (CDR) where for a fixed $p \in \mathbb{Z}^2$, we consider the digital segments from $p$ to $q$ for each $q \in \mathbb{Z}^2$.  In this paper, we consider families of weak consistent digital rays (WCDR) where we maintain four of the CDR properties but exclude the prolongation property.  In this paper, we give a WCDR construction that has optimal Hausdorff distance to the exact constant.  That is, we give a construction whose Hausdorff distance is 1.5 under the $L_\infty$ metric, and we show that for every $\epsilon > 0$, it is not possible to have a WCDR with Hausdorff distance at most $1.5 - \epsilon$.
\end{abstract}

\section{Introduction}
\label{sec:intro}
In this paper, we consider the digital representation of Euclidean objects.  For example, suppose we take a photograph of a convex object $O$.  Since $O$ is convex, we have that the Euclidean line segment connecting any two points inside of $O$ is contained within $O$.  If one wants an image segmentation algorithm to be able to find the pixels in our photograph that correspond to $O$, we may want the algorithm to be able to ensure that the output pixels are in some sense ``convex''.  How should we define a set of pixels to be convex?  The most natural way would be a similar definition to the Euclidean setting: a set of pixels is a digital convex object if the digital line segment connecting any two pixels is contained within the object.  This raises the question of how to define digital line segments.

In particular, consider unit grid $\mathbb{Z}^2$ where each point in the grid represents a pixel in an infinite image, and in particular consider the unit grid graph: for any two points $p=(p_x,p_y)$ and $q=(q_x,q_y)$ in $\mathbb{Z}^2$, $p$ and $q$ are neighbors if and only if $|p_x-q_x|+|p_y-q_y|=1$.  Now consider some pair of grid points $p$ and $q$; we would like to define a digital line segment $R_p(q)$ from $p$ to $q$ that is a path in the unit grid graph. There may be multiple ``good'' ways to digitally represent a Euclidean line segment.  For example, consider Figure \ref{fig:introFigs} (a).  In isolation, it may seem that either of Figure \ref{fig:introFigs} (b) or (c) would be a fine representation.  However when considering it within a family of digital line segments, the choice we make could impact the results for segmentation algorithms that rely on them.  In particular, the most simple definitions of digital line segments (``rounding'' the Euclidean line segment to the closest pixel) have a potentially undesirable property that their intersections may not be connected.  See Figure \ref{fig:introFigs2}.  In scenarios where we are not concerned with individual  digital line segments but rather multiple digital line segments (e.g., we are interested in digital convex objects), it may be desirable to consider carefully constructed digital line segment systems that satisfy some nice properties.

\begin{figure}
\centering
\begin{tabular}{c@{\hspace{0.1\linewidth}}c@{\hspace{0.1\linewidth}}c}

\includegraphics[scale=1]{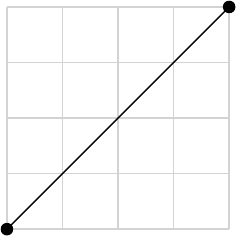}&
\includegraphics[scale=1]{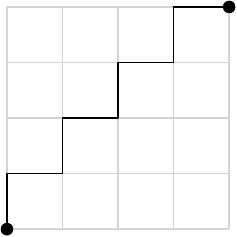}&
\includegraphics[scale=1]{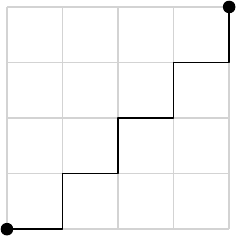}
\\
(a) & (b) & (c)
\end{tabular}
\caption{ (a) Euclidean line to digitize.  (b) One option. (c) Another option. }
\label{fig:introFigs}
\end{figure}

\begin{figure}
\centering
\includegraphics[scale=0.8]{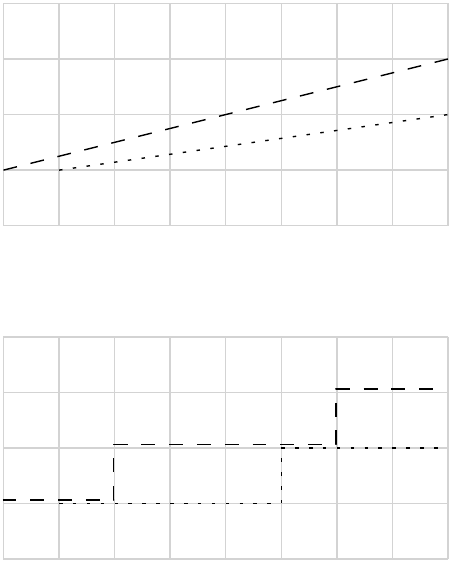}
\caption{Rounding two Euclidean line segments.}
\label{fig:introFigs2}
\end{figure}

For any point $o\in \mathbb{Z}^2$,  we call the set of all digital line segments $R_o(p)$ for each $p \in \mathbb{Z}^2$ a \textit{digital ray system} $R_o$.  Intuitively a digital ray system is a family of digital line segments that all share $o$ as a common endpoint.  A \textit{digital line segment system} has $R_p(q)$ defined for every $p,q\in\mathbb{Z}^2$.

\subsection{Consistent Digital Line Segments}
To deal with these issues, past researchers have considered systems of digital rays and digital line segments that collectively satisfy some properties that are also satisfied by their Euclidean counterparts.  In particular \cite{ChunKNT09,ChristPS12,ChowdhuryG15,ChowdhuryG16,ChiuK18} has considered systems that satisfy the following five properties.

\begin{description}
\item (S1) \textit{Grid path property:} For all $p, q \in \mathbb{Z}^2$, $R_p(q)$ is the points of a path from $p$ to $q$ in the grid topology.
\item (S2) \textit{Symmetry property:} For all $p, q \in \mathbb{Z}^2$, we have $R_p(q) = R_q(p)$.
\item (S3) \textit{Subsegment property:} For all $p, q \in \mathbb{Z}^2$ and every $r,s \in R_p(q)$, we have $R_r(s) \subseteq R_p(q)$.
\item (S4) \textit{Prolongation property:} For all $p, q \in \mathbb{Z}^2$, there exists $r \in \mathbb{Z}^2$, such that $r \notin R_p(q)$ and $R_p(q) \subseteq R_p(r)$.
\item (S5) \textit{Monotonicity property:} For all $p,q \in \mathbb{Z}^2$, if $p_x = q_x = c_1$ for any $c_1$ (resp. $p_y = q_y = c_2$ for any $c_2$), then every point $r \in R_p(q)$ has $r_x=c_1$ (resp. $r_y=c_2$).
\end{description}

Properties (S2) and (S3) are quite natural to ask for; the subsegment property (S3) is motivated by the fact that the intersection of any two Euclidean line segments is connected, and this property is violated by simple rounding schemes.  The prolongation property (S4) is motivated by the fact that any Euclidean line segment can be extended to an infinite line, and we may want a similar property to hold for our digital line segments.  While (S1)-(S4) form a natural set of axioms for digital segments, there are pathological examples of segments that satisfy these properties which we would like to rule out. For example, Christ et al.~\cite{ChristPS12} describe a CDS where a double spiral is centered at some point in $\mathbb{Z}^2$, traversing all points of $\mathbb{Z}^2$.  A CDS is obtained by defining $R_p(q)$ to be the subsegment of this spiral connecting $p$ and $q$.   To rule out these CDSes, property (S5) was added.

\begin{figure}
\centering
\begin{tabular}{c@{\hspace{0.1\linewidth}}c}

\includegraphics[scale=0.9]{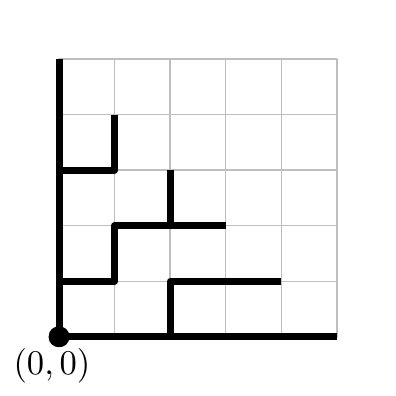}&
\includegraphics[scale=0.9]{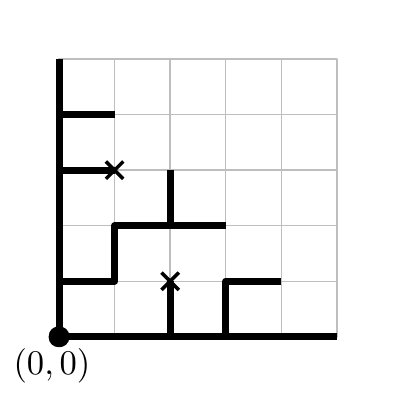}
\\
(a) & (b) 
\end{tabular}
\caption{ (a) A CDR for $(0,0)$ that satisfies (S4).  (b) A set of digital rays from $(0,0)$ that do not satisfy (S4).  In particular, the segments $R_{(0,0)}((1,3))$ and $R_{(0,0)}((2,1))$ do not prolong.}
\label{fig:prolongFigs}
\end{figure}

A digital ray system that satisfies properties (S1)-(S5) is called a \textit{consistent digital ray system} or CDR for short.  A digital line segment system that satisfies (S1)-(S5) is called a \textit{consistent digital line segment system} or CDS for short.  A CDR $R_o$ (or the rays for a single point in a CDS) can be viewed as a tree rooted at $o$, where the segment $R_o(p)$ is the unique simple path between $o$ and $p$ in the tree.  Note that the segments must be a tree because of property (S3).  See Figure \ref{fig:prolongFigs} (a) for an example CDR for $(0,0)$ to all grid points $(x,y)$ such that $x \geq 0, y \geq 0$, and $x+y \leq 5$.  See Figure \ref{fig:prolongFigs} (b) for an example of digital segments that satisfy all properties except for (S4).  Since they still satisfy (S3), the rooted tree perspective still applies to these segments.

\subsubsection{Previous Work on CDSes and CDRs}
 Luby \cite{Luby88} considers \textit{grid geometries} which are equivalent to systems of digital line segments satisfying (S1), (S2), (S5) described in this paper, and various works have considered CDRs and CDSes as defined above.  Past work has shown that there are many different CDR and CDS systems, so how should we measure the quality of such a system?  Past work usually measures the error of a system by considering the Hausdorff distance of a system.  For each grid point $v\in R_p(q)$, the distance from $v$ to the Euclidean line segment $\overline{pq}$ is usually defined to be the Euclidean distance between $v$ and the closest point to $v$ on $\overline{pq}$.  Then the error of $R_p(q)$, which we denote $E(R_p(q))$, is the maximum distance from $v$ to $\overline{pq}$ over all $v \in R_p(q)$. The error of a CDR or a CDS is then defined to be $\sup_{p,q \in \mathbb{Z}^2} \{ E(R_p(q)) \}$.  Chun et al.~\cite{ChunKNT09} give an  $\Omega(\log n)$ lower bound on the error of a CDR where $n$ is the number of grid points on a digital segment (i.e., $n := |p_x - q_x| + |p_y - q_y|$ for $R_p(q)$) which of course also applies to CDSes.  Chun et al.~give a construction of CDRs that satisfy the desired properties (S1)-(S5) with a tight upper bound of $O(\log n)$ on the error.  Note that the lower bound is due to combining properties (S3), (S4) and (S5).  For example, if we are willing to drop property (S3) then digital line segment systems with $O(1)$ error are easily obtained, for example the trivial ``rounding'' scheme used in Figure \ref{fig:introFigs} (d).  This system clearly satisfies (S5), and we can see that it will satisfy (S4) as well.  Without loss of generality assume that $p = (0,0)$.  Then $R_p(q)$ will clearly extend to $R_p(r)$ where $r = (2q_x, 2q_y)$ as $p$, $q$, and $r$ are co-linear.  Chun et al.~\cite{ChunKNT09} also show that if (S5) is relaxed, then $O(1)$ error is possible, although they describe the segments as ``locally snake-like almost everywhere'' but the segments may be acceptable if the resolution of the grid is sufficiently large.  Christ et al.~\cite{ChristPS12} extend the upper bound results from CDRs to CDSs by giving an optimal $O(\log n)$ upper bound the error for a CDS in $\mathbb{Z}^2$.  Chowdhury and Gibson have a pair of papers \cite{ChowdhuryG15,ChowdhuryG16} providing a characterization of CDSes in $\mathbb{Z}^2$. 

Most of the previous works listed above only apply to two-dimensional grids, but each of the properties (S1)-(S5) have natural generalizations to higher dimensions, and we may be interested in computing CDRs and CDSes for higher dimensions.  The construction of Chun et al.~\cite{ChunKNT09} for two-dimensional CDRs can be extended to obtain an $O(\log n)$ construction for a CDR in a three-dimensional grid.  More recently Chiu and Korman \cite{ChiuK18} have considered extending the two-dimensional results of \cite{ChristPS12} to three dimensions, and they show that at times they are able to obtain three-dimensional CDRs with error $\Omega(\log n)$, and even at times they can obtain a three-dimensional CDS, but unfortunately these systems have error $\Omega(n)$.

\subsection{Weak Consistent Digital Rays}
Suppose we have a CDR $R_o$ in $\mathbb{Z}^3$ where $o = (0,0,0)$, and suppose we consider the two-dimensional ``slice'' of points $v = (v_x, v_y, v_z)$ such that $v_z = 0$.  Now consider the digital rays $R_o(v)$ for each such $v$.  These rays must satisfy (S1), (S2), (S3), and (S5).  If any of these properties would be violated then the original CDR system would have to violate the same property; however, the two-dimensional slice does \textit{not} have to satisfy (S4).  Indeed, there may be a $v$ with $v_z = 0$ such that the digital segment $R_o(v)$ does not ``extend'' to any other point $v'$ such that $v'_z = 0$ but instead extends ``up'' to the point $(v_x, v_y, v_z + 1)$.  

This sparked the initial interest in what are called \textit{weak consistent digital rays} (WCDR) where the segments should satisfy all of the CDR properties except (S4).   In particular, the $\Omega(\log n)$ error lower bound of \cite{ChunKNT09} for two-dimensional CDRs critically relies upon (S4).  Consider a (not-weak) CDR $R_o$ in two-dimensions, and let $Q_1$ denote the ``first quadrant'' of $o$, that is $v\in Q_1$ if and only if $v_x \geq 0$ and $v_y \geq 0$.  For any $d\in \mathbb{N}$, let \textit{diagonal $d$} denote the Euclidean line $x + y = d$.  For any $v\in Q_1$, we say that $v$ is \textit{on} diagonal $v_x+v_y$. Consider the ``extension'' of $R_o$ from the points on diagonal $d$ to the points on diagonal $d+1$.  Chun et al.~show that there must be exactly 1 \textit{split point} $s$ on diagonal $d$ such that $R_o(s)$ extends to both $(s_x, s_y +1)$ and $(s_x+1,s_y)$.  For an example, see Figure \ref{fig:extend} (a) which shows the extensions of the CDR from Figure \ref{fig:prolongFigs} (a) from diagonal 4 to diagonal 5.  Then for every point $v$ on diagonal $d$ such that $v_x < s_x$, it only extends vertically to $(v_x, v_y+1)$, and every point $v$ on diagonal $d$ such that $v_x > s_x$ only extends horizontally to $(v_x+1, v_y)$.  This structure of CDRs helps lead to the error lower bound.  In the context of a WCDR, this split point property no longer holds.  Instead of picking a single split point on diagonal $d$ that then forces the extension of all other points on diagonal $d$, we can let each point on diagonal $d+1$ pick a ``parent'' point on diagonal $d$ (using the rooted tree perspective of a WCDR), and these parent selections do not need to be coordinated in any way since we do not have the requirement that segments on diagonal $d$ must extend to diagonal $d+1$.  See Figure \ref{fig:extend} (b) which shows the WCDR from Figure \ref{fig:prolongFigs} (b) between diagonals 4 and 5.  Points that are not chosen to be a parent are called \textit{inner leaves}.  Suppose on diagonal $d$ we have that the number of inner leaves is $x$.  Then it is not difficult to see that there will be $x+1$ split points on diagonal $d$, and when scanning the points on diagonal $d$ from left to right, we alternate between encountering split points and inner leaves, and the first and last will be split points.  This difference in structure creates the possibility of having $o(\log n)$ WCDRs, which in turn implies that it may be possible to obtain a (non-weak) CDR in $\mathbb{Z}^3$ with $o(\log n)$ error.  

\begin{figure}
\centering
\begin{tabular}{c@{\hspace{0.1\linewidth}}c}

\includegraphics[scale=0.8]{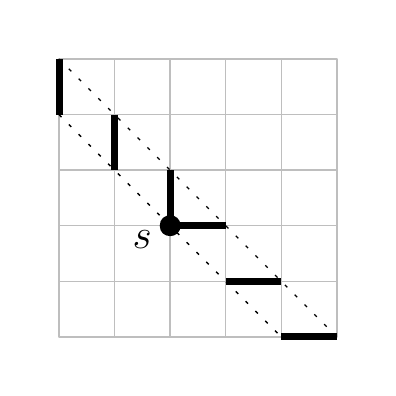}&
\includegraphics[scale=0.8]{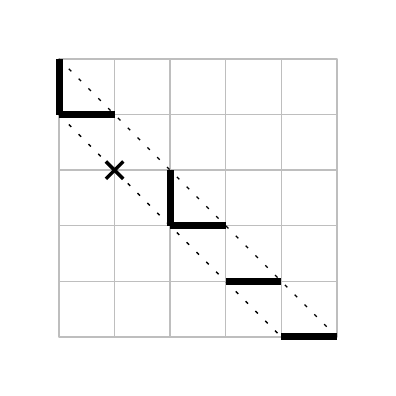}
\\
(a) & (b)
\end{tabular}
\caption{(a) A CDR extension with a single split point $s$.  (b) A WCDR extension with two split points and an inner leaf.}
\label{fig:extend}
\end{figure}

Chiu et al.~\cite{ChiuKST20} considered 2D WCDRs and considered their impact on (non-weak) CDRs in higher dimensions.  In particular, they consider the tradeoff between the number of inner leaves a 2D WCDR can have and the error of the system.  Indeed if a system does not have any inner leaves, then it is just a regular CDR and the $\Omega(\log n)$ error bound applies.  It may be possible to introduce some number of inner leaves to obtain a 2D WCDR with error $o(\log n)$.  How many inner leaves do we need to, say, obtain an error of $O(1)$?  They show that any WCDR defined for all points $p \in Q_1$ such that $p_x + p_y \leq N$ and has $k$ inner leaves between diagonals $N/2$ and $N$ has error $\Omega(\frac{N \log N}{N + k})$.  They then show the impact this has on (non-weak) CDRs in higher dimensions, as every inner leaf on a 2D slice must extend to a point not on this slice which will impact other slices.  They use this to show that any CDR in $d$ dimensions has error $\Omega(\log^{\frac{1}{d-1}} N)$.  They also consider what is the minimum number of inner leaves needed to obtain a WCDR with error $e$.  In the full version of their paper \cite{abs-2006-14059}, they give a system with error 2.5 under the $L_\infty$ metric establishing that $O(1)$ error is in fact possible.  Note that their $\Omega(\frac{N \log N}{N + k})$ lower bound implies that $k \in \Omega(N \log N)$ in order to achieve $O(1)$ error, but their construction has $k \in \Theta(N^2)$.  This leaves open the question the question as to whether is  possible to have a WCDR with $O(1)$ error and $o(N^2)$ inner leaves or if the lower bound could be improved.

\subsection{Our Contribution}
We consider optimizing the error of a 2D WCDR to the exact constant as we view the WCDR to be of general interest.  For some users, the $\Omega(\log n)$ lower bound that comes from including (S3), (S4) and (S5) may be unacceptable.  If we wish to achieve $o(\log n)$ error, then we are forced to drop at least one of these properties.  For a user who elects to drop (S3), there are plenty of options available that achieve $O(1)$ error and also satisfy (S4) and (S5) (e.g., a greedy rounding strategy).  It's possible to only drop (S5) and obtain $O(1)$ error, but a drawback of this system is the ``locally snake-like'' property that causes the segments to be of different ``widths'' on different diagonals (e.g, a segment may pass through 1 point on one diagonal but it passes through 3 points on another diagonal). The WCDR is the option for the user who does not want $\Omega(\log n)$ error but wants (S3) and (S5).

Since we want to optimize the exact constant, we need to pick the error metric carefully (i.e., use $L_\infty$, $L_2$, etc.).  Chiu et al.~used $L_\infty$ metric in their 2.5 error construction, and we believe that the $L_\infty$ metric indeed is the metric that best captures the error of a system.  That is, the diagonals 1 to $p_x + p_y$ form a kind of parametrization of $\overline{op}$, and when picking a point $v$ on diagonal $d$ to be on $R_o(p)$, we argue the goal should be to minimize the distance of $v$ to the intersection of diagonal $d$ and $\overline{op}$.  Let $i$ denote this intersection point.  Using $L_\infty$, the point on $\overline{op}$ that is closest to $v$ will always be $i$ regardless which $v$ on diagonal $d$ we are considering.  This does not hold for the $L_2$ metric where the closest point of $\overline{op}$ to $(v_x,v_y)$ and $(v_x+1,v_y-1)$ could be two different points.  For this reason, we believe that when being careful with constant factors, it is best to consider $L_\infty$.

\subsubsection{Our Results.} In this paper, we give a tight bound on the error of a WCDR in 2D to the exact constant.  We prove the following two theorems.

\begin{theorem}
There is a WCDR in $\mathbb{Z}^2$ with error 1.5 in the $L_\infty$ metric.
\label{thm:upperBound}
\end{theorem}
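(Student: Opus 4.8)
The plan is to fix $o=(0,0)$ and, using the $8$-fold symmetry of the grid, to construct the rays only for the first quadrant $Q_1$, reflecting the construction into the other octants afterward. Viewing the WCDR as a tree rooted at $o$, it suffices to specify for every $p \in Q_1 \setminus \{o\}$ a parent among its (at most two) grid-neighbors on the previous diagonal. Since every step then moves only up or right, the grid-path property (S1) and monotonicity (S5) hold automatically, and (S2) is immediate because each ray is its own reverse. The only genuinely global requirement is the subsegment property (S3), which for a single-rooted tree is equivalent to the \emph{laminar} (non-rejoining) condition: once the paths to two points separate on some diagonal they may never meet again, i.e.\ each node has a unique parent and the direction intervals of the subtrees refine from one diagonal to the next.

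Writing $t_p = p_x/(p_x+p_y)$ for the direction of $p$ and recalling that diagonal $d$ meets $\overline{op}$ at $x$-coordinate $d\,t_p$, the $L_\infty$ distance from a node $v$ on diagonal $d$ to $\overline{op}$ is exactly $|v_x - d\,t_p|$. Hence the error equals $\sup |v_x - d\,t_p|$ over all $p$ and all ancestors $v$ of $p$, and the goal is to keep this at most $1.5$. Equivalently, for each node $v$ on diagonal $d$ with $x$-coordinate $k$, the directions of the descendants of $v$ must lie in the interval $[\frac{k-1.5}{d},\frac{k+1.5}{d}]$ of width $3/d$. The naive choice of rounding each $\overline{op}$ to the nearest lattice point on every diagonal achieves error $0.5$ but violates laminarity (two paths that round apart on one diagonal can round back together on a later one), so the difficulty is genuinely the interaction of (S3) with integrality.

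The key leverage is that a WCDR, unlike a CDR, may contain many split points and \emph{inner leaves} on a single diagonal. In a CDR the unique split point forces a rigid, counter-like extension pattern that accumulates $\Theta(\log n)$ drift; by contrast, distributing split points freely lets us \emph{re-center} the tracking of each Euclidean line every constant number of diagonals, keeping the drift bounded. Concretely, I would designate, on each diagonal $d$, a sparse set of \emph{through-traffic} nodes---roughly every third $x$-coordinate, so that their subtree-direction intervals have width close to $3/d$ and tile $[0,1]$---while the remaining nodes are inner leaves reached only as endpoints. Placing the interval boundaries near the half-integer crossings $\frac{k-1/2}{d}$ and then reconciling consecutive diagonals so that each interval at diagonal $d+1$ is contained in a single interval at diagonal $d$ (assigning each node's unique parent accordingly) should yield a valid laminar, monotone tree in which every node's $x$-coordinate sits within $1.5$ of the directions it must serve; the constant $1.5$ appears precisely as half the maximal admissible interval width $3/d$.

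It then remains to verify three things: that the parent assignment is well defined and laminar (so that (S3) holds), that every lattice point of $Q_1$ is reached (surjectivity of the tree), and that the error never exceeds $1.5$, which is checked by evaluating $|v_x-d\,t_p|$ at the extreme directions of each subtree. The hard part will be this last, simultaneous constraint: laminarity forces the refinement at diagonal $d+1$ to split a parent interval into pieces that must themselves be realizable with integer $x$-coordinates and must stay within the tighter windows $[\frac{k-1.5}{d+1},\frac{k+1.5}{d+1}]$ of the children, which is exactly where the tension between coordinated (tree-forced) choices and near-optimal rounding is felt. I expect the delicate cases to be the diagonal boundaries $k\approx 0$ and $k\approx d$ together with the widest subtrees, where the window of admissible split positions is narrowest; controlling these, and showing that the half-integer-offset breakpoints can always be nudged to integer-compatible positions at every scale, is the technical heart of the argument.
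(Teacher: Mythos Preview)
Your high-level framing is right—the tree view, the error formula $|v_x - d\,t_p|$, and the idea that dropping (S4) permits many inner leaves so one can repeatedly re-center—but the concrete plan has a structural infeasibility and leaves the crux unresolved. You propose that on each diagonal $d$ roughly one node in three is a through-traffic node while the remaining $\sim 2d/3$ are inner leaves. That cannot happen: each of the $d+2$ points on diagonal $d+1$ needs a parent on diagonal $d$, and any point on diagonal $d$ can be the parent of at most two of them, so at least $\lceil (d+2)/2\rceil$ points on $d$ are chosen and at most $\lfloor d/2\rfloor$ are inner leaves. Through-traffic nodes must therefore sit at least every second position, their subtree direction intervals have width at most about $2/d$ rather than $3/d$, and the picture of width-$3/d$ windows tiling $[0,1]$ is the \emph{constraint} that error $1.5$ imposes, not a construction one can realize with disjoint subtrees.

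The paper's construction accordingly works with width-at-most-$2$ pieces. On each diagonal $2^i$ the split points are exactly the odd $x$-coordinates (so half the interior points are inner leaves—the maximum possible), and the region between diagonals $2^i$ and $2^{i+1}$ is partitioned into ``zones'' $Z_i^j$ bounded by the rays $\ell(v_i^j)$; each point picks as parent whichever of $p^\leftarrow,p^\downarrow$ is closer to its zone's midpoint. This guarantees that every $q\in R(p)$ lies in a zone that $\ell(p)$ actually meets, but since zone widths approach $2$ near diagonal $2^{i+1}$ that alone only yields error $<2$. The last half unit comes from a second lemma your sketch does not anticipate: if some $q\in R(p)$ is at distance $\ge 1.5$ from one boundary of its zone, then \emph{every} later point of $R(p)$ up to $p$ stays in that same zone and remains $\ge 1.5$ from that boundary, which forces $\ell(p)$ to pass within $0.5$ of $q$ on $D(q)$. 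This monotone ``trapping'' argument—not a tiling by maximal windows—is what converts width-$2$ zones into error $1.5$, and it (together with a feasible inner-leaf density) is the missing technical ingredient in your plan.
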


\begin{theorem}
For every $\epsilon > 0$, there is no WCDR in $\mathbb{Z}^2$ with error at most $1.5-\epsilon$ in the $L_\infty$ metric.
\label{thm:lowerBound}
\end{theorem}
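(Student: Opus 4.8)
The plan is to prove the contrapositive in its strong form: every WCDR has error at least $1.5$, so none can have error at most $1.5-\epsilon$. First I would normalize the picture. By symmetry it suffices to study the rays emanating from the root $o=(0,0)$ into the first quadrant $Q_1$. Using (S1) and (S5) one checks that each $R_o(v)$ is an $xy$-monotone staircase, so the whole family is determined, diagonal by diagonal, by the ``parent'' maps $\pi_d$ that send each point on diagonal $d$ to its predecessor on diagonal $d-1$; by monotonicity and the tree (subsegment) structure each $\pi_d$ is non-decreasing with $\pi_d(c)\in\{c-1,c\}$, $\pi_d(0)=0$, and $\pi_d(d)=d-1$, where I index a point on diagonal $d$ by its $x$-coordinate (its \emph{position}) $c\in\{0,\dots,d\}$. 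For a target $v$ at position $j$ on diagonal $D$, let $f_d(j)$ denote the position at which $R_o(v)$ meets diagonal $d$, obtained by composing $\pi_{d+1},\dots,\pi_D$. Because the $L_\infty$-nearest point of $\overline{ov}$ to any grid point on diagonal $d$ is exactly the crossing of $\overline{ov}$ with that diagonal, the error contributed at diagonal $d$ is the horizontal offset $|f_d(j)-\tfrac{d}{D}j|$, and the error of the WCDR equals $\sup_{D,j,d}|f_d(j)-\tfrac{d}{D}j|$.

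Next I would record the structural facts that drive the bound. For a \emph{fixed} target diagonal $D$ the functions $f_d(\cdot)$ are non-decreasing surjections onto $\{0,\dots,d\}$ with unit steps, so the subtree formed by the rays to diagonal $D$ has exactly one split point per diagonal; equivalently there is a bijection between the $D$ adjacent target pairs $(j,j+1)$ and the $D$ diagonals, where pair $(j,j+1)$ is separated at a unique split diagonal. The essential difficulty --- and the reason a WCDR can beat the CDR bound yet still cannot go below $1.5$ --- is that the maps $\pi_d$ are \emph{shared} across all target diagonals at once, so the split positions cannot be re-chosen freely for each $D$. I would then assume, for contradiction, that the error is at most $\delta:=1.5-\epsilon$, i.e.\ $|f_d(j)-\tfrac{d}{D}j|\le\delta$ for all $D$, all targets $j$, and all $d\le D$.

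The main line of attack is a counting argument on the fan of rays to one large diagonal $D$, sharpened using the monotone-staircase constraint across consecutive diagonals. Fixing an intermediate diagonal $d$, the targets routed through a given position $c$ form an interval of $j$'s, and the error bound confines that interval to $(\tfrac{D}{d}(c-\delta),\tfrac{D}{d}(c+\delta))$; intersecting with the analogous window one diagonal higher (position $c$ or $c+1$ on diagonal $d+1$, according to whether the ray took an up- or right-step) narrows these intervals, and iterating the narrowing up to diagonal $D$ together with the integrality of positions is what should force a specific ray to lie a full unit beyond its nearest admissible rounding. Alternatively, I would exhibit a target of carefully chosen slope whose chord crosses infinitely many diagonals near half-integers, and argue that the shared split structure cannot keep that ray within $\delta$ at all of them simultaneously, so that some diagonal must witness offset $\ge 3/2$.

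I expect the crux to be exactly this quantitative sharpening from $1/2$ to $3/2$. A naive single-diagonal pigeonhole only shows that each position absorbs an interval of width roughly $2\delta D/d$ of targets, which reproduces the trivial rounding bound $\delta\ge 1/2$ and nothing more; the whole content lies in coupling adjacent diagonals, using the unit-step and one-split-per-diagonal structure, and isolating the single ray that gets pushed the extra half-unit past its rounding because it shares ancestry with rays of nearby but distinct slopes. Finally, since the argument produces, for each $\epsilon>0$ and each sufficiently large $D$, a target with offset exceeding $1.5-\epsilon$, letting $D\to\infty$ shows that the supremum defining the error is at least $1.5$, completing the contradiction.
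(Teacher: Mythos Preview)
Your proposal contains a fundamental structural error that undermines the entire plan. You assert that for a fixed target diagonal $D$ the maps $f_d:\{0,\dots,D\}\to\{0,\dots,d\}$ are \emph{surjections}, and hence that ``the subtree formed by the rays to diagonal $D$ has exactly one split point per diagonal.'' This is precisely the single-split-point property of a (non-weak) CDR, and it is exactly what a WCDR is allowed to violate. In a WCDR a point $c$ on diagonal $d$ may be an inner leaf: neither of its potential children chooses it as parent, so $c$ is not in the image of $\pi_{d+1}$, and a fortiori not in the image of $f_d$. The paper states this explicitly in its discussion of WCDRs (``this split point property no longer holds''), and the whole difficulty of the lower bound is that the WCDR designer may place many split points and inner leaves on each diagonal, decoupling the rays from one another far more than in a CDR. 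Your counting/coupling plan, which leans on surjectivity and one-split-per-diagonal, therefore attacks the wrong object.

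Even setting that aside, the proposal never carries out the step you yourself flag as the crux: sharpening the trivial $1/2$ to $3/2$. The paper's argument is quite different and does not proceed by a global counting on one large target diagonal. Instead it isolates a single split point $s$ with slope just above $1$ whose subtree has cone width close to $2$ at $D(s)$ (this is the content of the key lemma), and then shows that on some nearby diagonal there is a grid point $p\in R(s)$ sitting essentially at the center of that cone. Since both extreme rays of the cone pass through $p$'s parent, whichever of $p^\leftarrow$ or $p^\downarrow$ is chosen lies roughly $1.5$ from one of the two extreme lines. The work is in producing such an $s$ (a pigeonhole between two diagonals $N''$ and $N'$ bounding the number of ancestors of a fixed block of $22$ target points) and in controlling slopes and cone widths quantitatively in terms of $\delta$. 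None of this is a per-diagonal surjectivity or one-split argument; if you want to repair your approach, the first thing to drop is the surjectivity claim and to work instead with the existence of \emph{some} split point whose two branches both reach far enough.
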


We give a WCDR construction $R_o$ such that for every $p\in \mathbb{Z}^2$, we have that the error of $R_o(p)$ is less than 1.5 in the $L_\infty$ metric, and we show that for any $\epsilon>0$ that it is not possible to have a WCDR in 2D with error at most 1.5-$\epsilon$ in the $L_\infty$ metric.  Essentially, as the length of the segments gets larger, the error of our construction approaches 1.5 but never reaches it, and our lower bound shows that it is not possible to do better than this.

To state our results in the context of the work of Chiu et al.~\cite{ChiuKST20}, recall they showed that for all segments with length $N$ and $k$ inner leaves between diagonals $N/2$ and $N$ has error $\Omega(\frac{N \log N}{N + k})$.  This implies that $\Omega(N \log N)$ inner leaves are required to obtain $O(1)$ error.  We remark that for our construction, $k \in \Theta(N^2)$.  Our goal was to optimize the error and not minimize the number of inner leaves, but we are not aware of a ``simple'' way to modify our construction to obtain $o(N^2)$ inner leaves while maintaining $O(1)$ error.

\subsection{Organization of the Paper}
In Section \ref{sec:ub}, we present our construction with optimal error that proves Theorem \ref{thm:upperBound}.  In Section \ref{sec:lb}, we present the lower bound that proves Theorem \ref{thm:lowerBound}.

\section{Upper Bound}
\label{sec:ub}
In this section, we present a WCDR construction with error 1.5.  We begin with some preliminaries and definitions.

\subsection{Preliminaries}
\label{sec:notation}
We now define some notation that we will use throughout the paper.  Let $o$ be the origin $(0,0)$.   
For any point $p\in Q_1$ such that $p \neq o$, let $D(p) := p_x + p_y$ denote the diagonal that $p$ is on, and let $\ell(p)$ denote the Euclidean line through $o$ and $p$.  We define $M(p) := \frac{p_y}{p_x}$ to be the slope of $\ell(p)$ if $p_x > 0$, and otherwise we define $M(p)$ to be $\infty$.  Let $p^\leftarrow$, $p^\downarrow$, $p^\rightarrow$, and $p^\uparrow$ denote the points $(p_x - 1, p_y), (p_x, p_y-1), (p_x+1,p_y),$ and $(p_x,p_y+1)$ respectively.  Let $p^\nwarrow$ and $p^\searrow$ denote the points $(p_x - 1, p_y + 1)$ and $(p_x + 1, p_y - 1)$ respectively.  For two points $p$ and $q$, we say $p$ is \textit{above} $\ell(q)$ if $M(p) > M(q)$, and we say $p$ is \textit{below} $\ell(q)$ if $M(p) \leq M(q)$ (i.e., we break ties by saying $p$ is below $\ell(q)$.  We say that $p$ is \textit{between} $\ell(q)$ and $\ell(q')$ if $p$ is below $\ell(q)$ and is above $\ell(q')$.

\begin{figure}
\centering
\includegraphics[scale=0.8]{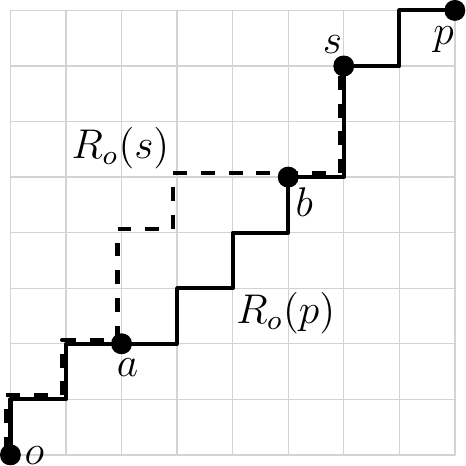}
\caption{An example of $R_o(s)$ and $R_o(p)$ splitting apart at $a$ and coming back together at $b$. The dashed line represents $R_o(s)$ and the solid line represents $R_o(p)$.}
\label{fig:s3}
\end{figure}

We can view any WCDR as a binary tree rooted at $o$, and then in this setting each point $v \in Q_1$ such that $v \neq o$ will have a ``parent'' on diagonal $D(v)-1$.  From this perspective,  the following procedure will produce a WCDR $R_o$ in $Q_1$.  For each point $v \in Q_1 \setminus \{o\}$, if $v_x = 0$ set $v.parent = v^\downarrow$, else if $v_y = 0$ set $v.parent = v^\leftarrow$, else arbitrarily choose one of $v^\downarrow$ and $v^\leftarrow$ to be $v.parent$.  Then the digital ray $R_o(p)$ can be computed in ``reverse'' by starting at $p$ and following the parents back to $o$.  This procedure clearly satisfies (S1), (S2), and (S5). We can show that (S3) will be satisfied with a simple proof by contradiction. Assume there exists an $s \in R_o(p)$ such that $R_o(s) \not\subseteq R_o(p)$. As pointed out in \cite{ChristPS12} there must be a point $a$ where $R_o(s)$ and $R_o(p)$ split apart for the first time, and a point $b$ when they first come back together, see Figure \ref{fig:s3}. Note in the figure $b.parent = b^\leftarrow$ in $R_o(s)$ and $b.parent = b^\downarrow$ in $R_o(p)$. However, our procedure only allows $b$ to select one parent for all segments that pass through $b$, a contradiction. Hence, (S3) holds.

Therefore a construction that was obtained from this procedure will certainly produce a feasible WCDR, and then the goal can be to carefully choose the parents so as to minimize the error.  After a WDCR for $Q_1$ is obtained, it can easily be extended to $\mathbb{Z}^2$ by ``mirroring'' the construction to the other quadrants.  We remark that regular CDRs that satisfy (S4) also can be viewed as a rooted binary tree in this way, but the key difference between the two problems is that if we want (S4), we have to ensure that every point on diagonal $d-1$ gets picked to be the parent for some point on diagonal $d$.  We have no such restriction when considering WCDRs; points on diagonal $d-1$ that have no points on diagonal $d$ that pick them to be a parent will be inner leaves.

\subsection{The construction}

In this section we describe a method for each point in $Q_1$ to pick its parent so that the resulting WCDR has error 1.5.  We maintain a pattern on each diagonal $d$ such that $d$ is a power of 2 starting with diagonal 4.   In particular, let $p$ be a point on diagonal $2^i$ for an integer $i \geq 2$.  Then $p$ will be a split point that will extend to points on diagonal $2^{i+1}$ if and only if $p_x$ is odd.  If $p_x$ is even such that $0 < p_x < 2^i$, then $p$ will be an inner leaf.  If $p_x = 0$ or $p_x = 2^i$, then it is on the $x$ or $y$ axis and only extends to the point of diagonal $2^{i+1}$ that is on the same axis.  See Figure \ref{fig:construct} where the split points are represented as squares and the inner leaves are represented as crosses.

\begin{figure}
\centering
\includegraphics[scale=1.0]{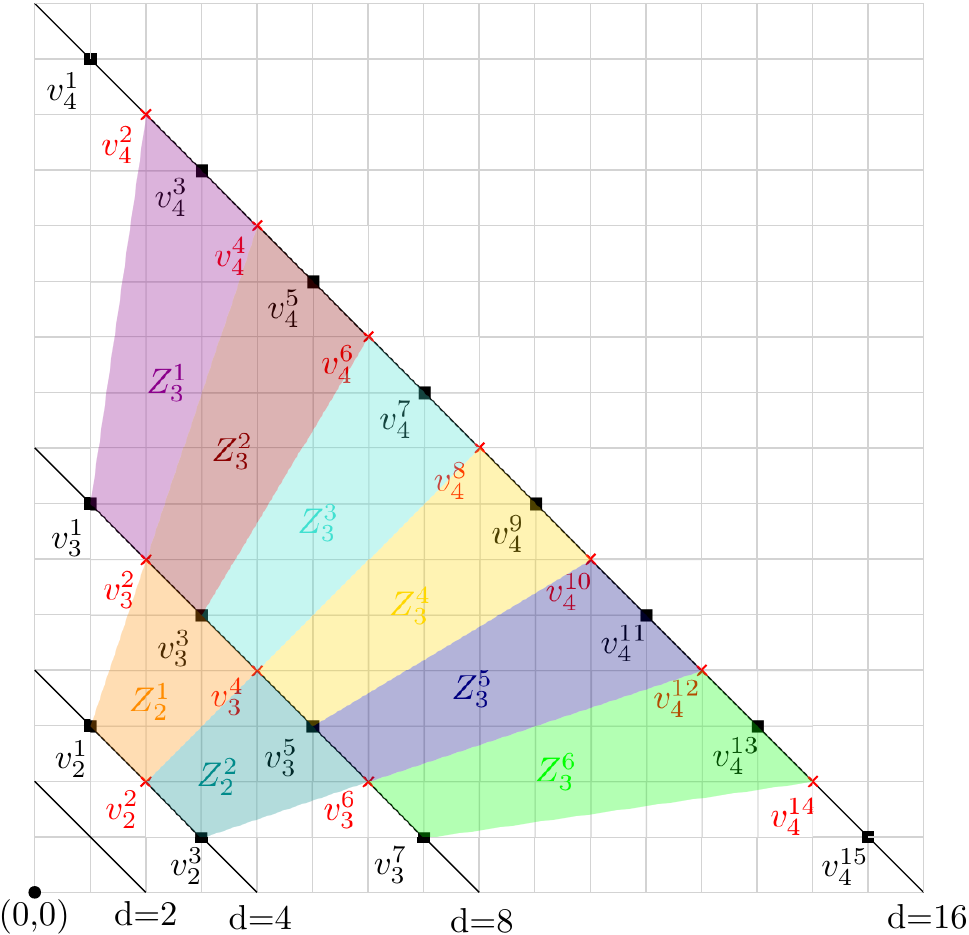}
\caption{Illustrating the definitions used in the construction.}
\label{fig:construct}
\end{figure}

\begin{figure}
\centering
\includegraphics[scale=1.0]{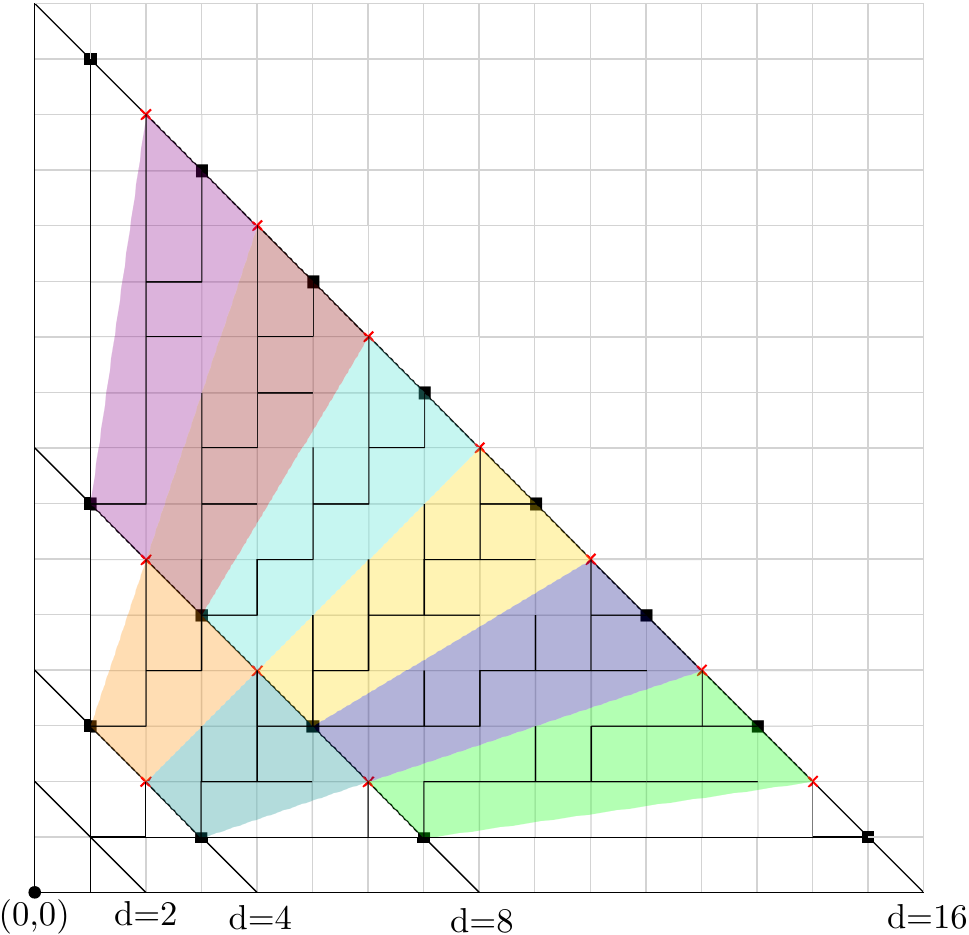}
\caption{Illustration of our construction of a WCDR with error 1.5 up to $d=16$.}
\label{fig:construct2}
\end{figure}

If $p$ is such that $p_x \leq 1$ or $p_y \leq 1$, then any monotone segment has error less than 1.  We will then focus on the points $p$ such that $p_x \geq 2$ and $p_y \geq 2$.  For each $i \geq 0$, let $C_i$ denote the points $p$ such that $2^i < D(p) \leq 2^{i+1}$, $p_x \geq 2$, and $p_y \geq 2$.  Let $v_i^j$ denote the point $(j, 2^i-j)$ (i.e., the point on diagonal $2^i$ with x-coordinate $j$) for each $j \in \{1, \ldots, 2^i-1\}$, and now consider the lines $\ell(v_i^1), \ell(v_i^2), \ldots \ell(v_i^{2^i-1})$.  Again suppose that $i \geq 2$ (so that there are at least 3 lines).  Then let $Z_i^j$ denote all $r\in \mathbb{R}^2$ such that $2^i < r_x + r_y \leq 2^{i+1}$, $r$ is below $\ell(v_i^j)$, and $r$ is above $\ell(v_i^{j+1})$.  We call each $Z_i^j$ a \textit{zone}.  Again see Figure \ref{fig:construct} where the shaded regions represent the zones. The intuition behind the zones is that they designate which points of $C_i$ we want the split points of diagonal $2^i$ to extend to.  Point $v_i^j$ will be a split point if and only if $j$ is odd.  Then we will have that all grid points in $Z_i^{j-1}$ and $Z_i^{j}$ will have $v_i^j$ as an ``ancestor''.  More specifically, as $v_i^j$ will be a split point, we will have $(v_i^j)^\uparrow$ and $(v_i^j)^\rightarrow$ will both pick $v_i^j$ as their parent.  Then all of the points of $Z_i^{j-1}$ will have $(v_i^j)^\uparrow$ as an ancestor and all the points of $Z_i^{j}$ will have $(v_i^j)^\rightarrow$ as an ancestor.  

Now let $M(i,j,d)$ for some diagonal $d \in \{2^i + 1, \ldots , 2^{i+1} - 1\}$ denote the ``midpoint'' of $Z_i^j$ with respect to diagonal $d$.  That is, it is the point (not necessarily with integer coordinates) on diagonal $d$ whose $L_\infty$ distance to $\ell(v_i^j)$ and $\ell(v_i^{j+1})$ is the same.  Also observe that for every diagonal $d \in \{2^i+1, 2^{i+1}\}$ there is either one or two grid points on diagonal $d$ that is in $Z_i^j$.  This follows from the fact that the distance between $\ell(v_i^j)$ and $\ell(v_i^{j+1})$ is 1 on diagonal $2^i$, and therefore for any such $d$ the distance between the lines is greater than 1 (implying there must be at least 1 grid point in $Z_i^j$ on diagonal $d$).  Also on diagonal $2^{i+1}$ the distance between the lines is 2, and therefore there cannot be 3 or more grid points on the diagonal (since we break ties in the same direction).

Now we are ready to formally state the construction.  For each $p \in Q_1$, we use Algorithm \ref{alg:construct} to pick its parent.  The digital ray $R_o(p)$ (which we will now call $R(p)$ for brevity) then is determined ``in reverse'' by going from $p$, $p.parent$, $(p.parent).parent$, etc. until we reach the origin. See Figure \ref{fig:construct2} for an illustration.  As mentioned before, this will certainly produce a feasible WCDR.  It remains to argue that the error of the resulting WCDR is 1.5.  We remind the reader that the error of the WCDR is the supremum of the errors of all digital rays.  In this section we prove that every digital ray in our WCDR has error less than 1.5.  Then in the next section we prove that it is not possible to have \textit{any} WCDR with error strictly better than 1.5 which implies that the supremum for our construction is in fact 1.5.

\begin{algorithm}
\caption{pickParent($p$)}
\label{alg:construct}
\begin{algorithmic}[1]
 \IF{$p_x \leq 1$ \AND $p$ is not $(1,0)$}
 \STATE $p.parent = p^\downarrow$
 \ELSIF{$p_y \leq 1$}
 \STATE $p.parent = p^\leftarrow$
 \ELSIF{$p_y$ is 2 \AND $D(p)$ is a power of 2}
 \STATE $p.parent = p^\downarrow$
 \ELSIF{$D(p)-1$ is a power of 2}
  \STATE Set $p.parent$ to be whichever of $p^\downarrow$ and $p^\leftarrow$ has odd x-coordinate.
 \ELSE
 \STATE Let $Z_i^j$ be the zone that $p$ belongs to.  Set $p.parent$ to be whichever of $p^\downarrow$ and $p^\leftarrow$ is closest to $M(i,j,D(p)-1)$, breaking ties arbitrarily.
 \ENDIF
\end{algorithmic}
\end{algorithm}

\begin{lemma}
The WCDR produced by Algorithm \ref{alg:construct} is such that for every $p\in Q_1$, $R(p)$ has error less than 1.5 in the $L_\infty$ metric.
\end{lemma}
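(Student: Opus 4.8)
The plan is to bound the $L_\infty$ distance from each grid point $v \in R(p)$ to the Euclidean segment $\overline{op}$. By the discussion preceding the construction (the $L_\infty$ observation about diagonals), for a point $v$ on diagonal $d$, the closest point of $\overline{op}$ under $L_\infty$ is the intersection $i_d$ of diagonal $d$ with $\overline{op}$, so the error of $v$ equals the $L_\infty$ distance from $v$ to $i_d$. Since both $v$ and $i_d$ lie on diagonal $d$ (the line $x+y=d$), this distance is simply $\tfrac{1}{2}$ of the difference of their $x$-coordinates (equivalently their displacement along the diagonal), so everything reduces to controlling how far, along each diagonal, the chosen point $v$ can drift from the true line $\ell(p)$. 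I would set up this per-diagonal ``horizontal offset'' as the central quantity and prove it stays strictly below $3$ in $x$-coordinate units (which is $1.5$ in $L_\infty$).

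The key structural fact to exploit is the zone decomposition. First I would observe that it suffices to handle points $p$ with $p_x,p_y \geq 2$, since the axis-adjacent cases give error below $1$ as already noted. For such $p$, at each power-of-two diagonal $2^i$ the line $\ell(p)$ passes through some zone $Z_i^j$, and the construction guarantees that the split point $v_i^j$ (with odd index, hence the one actually extended) is within $L_\infty$ distance essentially $1$ of $\ell(p)$ on that diagonal, because consecutive lines $\ell(v_i^j)$ are spaced exactly $1$ apart on diagonal $2^i$ and $p$'s ray is routed through the split point bounding its zone. Then, for the ``interior'' diagonals $d \in \{2^i+1,\dots,2^{i+1}\}$, Algorithm~\ref{alg:construct} steers each point toward the midpoint $M(i,j,d-1)$ of the zone. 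Here I would prove an invariant: as we move from diagonal $2^i$ up to $2^{i+1}$, the point stays within the zone $Z_i^j$ (or its paired neighbor), and since the zone width grows from $1$ (at $2^i$) to $2$ (at $2^{i+1}$), the worst-case distance from $\ell(p)$ to the chosen grid point never exceeds half the maximal zone width plus the rounding error of snapping to an integer grid point, and this total stays strictly under $1.5$ in $L_\infty$.

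The main technical step, and the one I expect to be the obstacle, is verifying the midpoint-tracking invariant: that the greedy rule ``pick whichever of $p^\downarrow, p^\leftarrow$ is closer to $M(i,j,D(p)-1)$'' actually keeps the ray confined near the zone midpoint across all interior diagonals, rather than letting error accumulate as $d$ increases within a block. I would argue this by induction on $d$ within a single block $[2^i, 2^{i+1}]$: assuming the point on diagonal $d-1$ lies within the allowed band around $M(i,j,d-1)$, the two candidate parents $p^\downarrow,p^\leftarrow$ differ by exactly one unit in $x$-coordinate, so at least one of them lands within $\tfrac{1}{2}$ (in $x$) of the continuous midpoint $M(i,j,d)$; combined with the fact that $M(i,j,d)$ itself tracks the center of a zone whose half-width is at most $1$ on diagonal $2^{i+1}$, the accumulated offset from $\ell(p)$ stays bounded. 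The delicate part is that the midpoint is defined relative to the \emph{zone} boundaries $\ell(v_i^j),\ell(v_i^{j+1})$, not relative to $\ell(p)$ itself, so I must bound the gap between $\ell(p)$ and the zone midpoint — this is where the maximal zone half-width of $1$ at diagonal $2^{i+1}$ enters — and then add the at-most-$\tfrac12$ grid-rounding term, checking that the sum is strictly less than the $1.5$ threshold (the strictness coming from the fact that $\ell(p)$ is never exactly on a zone boundary line for the relevant $p$).

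To organize the write-up, I would first state and prove a claim converting $L_\infty$ distance-to-segment into half the along-diagonal displacement, then a claim bounding the displacement at each power-of-two diagonal via the split-point routing, and finally the inductive midpoint-invariant claim for interior diagonals; the lemma then follows by combining these, taking the supremum over $v \in R(p)$ and noting it is strictly below $1.5$ for every finite $p$.
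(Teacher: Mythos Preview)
Your proposal has two genuine gaps.

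First, a computational slip: for two points on the same diagonal $x+y=d$, say $(a,d-a)$ and $(b,d-b)$, the $L_\infty$ distance is $\max(|a-b|,|b-a|)=|a-b|$, not half of it. So the target bound on the horizontal offset is $1.5$, not $3$. This by itself is repairable, but it means the margins in your later arithmetic are half what you think they are.

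Second, and more seriously, the midpoint-tracking invariant you propose is false, and the triangle-inequality decomposition through the midpoint is too lossy to reach $1.5$. The endpoint $p$ can sit at the boundary of its zone (for example, take $p$ on diagonal $2^{i+1}$ with $p_x$ even: then $p$ lies on $\ell(v_i^j)$ and is distance $1$, not $0.5$, from the midpoint $M(i,j,2^{i+1})$). Walking backward, the greedy rule moves toward the midpoint, but the points on $R(p)$ need not get within $0.5$ of it; the zone can contain two grid points on a diagonal, and the parent chosen can be the one farther from the midpoint (the other candidate being outside the zone). So your bound becomes ``distance to midpoint'' $+$ ``midpoint to $\ell(p)$'' $\lesssim 1 + 1 = 2$, which is not good enough. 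The strictness argument you sketch (``$\ell(p)$ is never exactly on a zone boundary'') does not close this gap, because the two terms can each be close to $1$.

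What the paper does instead is avoid the midpoint entirely and exploit a correlation you are discarding. It proves two things: (i) $R(p)$ stays inside zones that $\ell(p)$ intersects; and (ii) a backward-propagation claim: if some $q\in R(p)$ is at distance $\geq 1.5$ from one boundary of its zone, then the only child that could have picked $q$ as parent is $q^\uparrow$ (since $q^{\searrow}$ is strictly preferred over $q$ by $q^\rightarrow$), so every point on $R(p)$ between $q$ and $p$, including $p$ itself, is also $\geq 1.5$ from that same boundary. Combined with zone width $<2$, this forces $\ell(p)$ to lie within $0.5$ of the \emph{other} boundary, hence within $0.5$ of $q$. In the complementary case $q$ is $<1.5$ from both boundaries, and since $\ell(p)$ is inside the zone by (i), the bound is immediate. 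This case split, not a midpoint invariant, is the mechanism that yields the strict $1.5$.
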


\begin{proof}
Let $p$ be any point in $Q_1$.  If $p_x \leq 1$ or $p_y \leq 1$ then trivially any monotone digital ray will have error less than 1, so let us consider the points $p$ such that $p_x \geq 2$ and $p_y \geq 2$.  

In this paragraph, we handle the case where $p$ does not have a zone. The only such points $p$ that do not have a zone are the points $p$ such that $D(p)$ is a power of 2 (at least 4) and $p_y$ is 2 (this is due to the fact that we break ties by rounding down, and the region below these points are not in any zone).  Consider ``walking'' along $R(p)$ for these $p$ starting at $p$ and walking back towards $o$.  The segment moves vertically once, then moves horizontally until we reach $(1,1)$, then moves vertically once, then finally horizontally once to reach $o$.  Clearly this segment has error less than 1 for all such $p$.  

For the remainder of the proof, assume that $p$ does have a zone.  At a high level, we show that $R(p)$ will only contain grid points that are in zones that are intersected by $\ell(p)$ (until we reach a grid point with some coordinate that is 1).  This gets us most of the way there, but zones have diagonal widths that approach 2 as we reach the next power of 2 diagonal from the origin.  This means that it could be possible to have that $R(p)$ only contains grid points that are in zones intersected by $\ell(p)$ and yet the error approaches 2.  We will show that given our construction, the error of $R(p)$ will in fact be less than 1.5.

\subparagraph*{For every $q\in R(p)$ such that $q_x > 1$ and $q_y > 1$, $\ell(p)$ intersects the zone of $q$.}  Let $Z_i^j$ denote the zone that $p$ belongs to.  We first argue that when walking from $p$ to $o$ along $R(p)$, the segment ``stays inside'' $Z_i^j$ until we reach diagonal $2^i$.  This can be argued inductively: take any $q \in R(p)$ that is in $Z_i^j$.  If $D(q) = 2^i + 1$ then we are done, so suppose $D(q) > 2^i+1$.  Then $q$ picked its parent by choosing whichever of $q^\downarrow$ and $q^\leftarrow$ is closest to the midpoint of $Z_i^j$.  Since there must be at least one grid point on diagonal $D(q)-1$ in $Z_i^j$, it must be that at least one of $q^\downarrow$ and $q^\leftarrow$ is in $Z_i^j$.  Indeed, if $q^\downarrow$ is above $\ell(v_i^j)$ or $q^\leftarrow$ is below $\ell(v_i^{j+1})$ then $q$ would not be in $Z_i^j$, and if $q^\leftarrow$ is above $\ell(v_i^j)$ \textit{and} $q^\downarrow$ is below $\ell(v_i^{j+1})$ then there would not be any grid point on $D(q)-1$ in $Z_i^j$.  So $q$ has at least one parent option in $Z_i^j$, and a point in $Z_i^j$ clearly must be closer to the midpoint than a point outside $Z_i^j$, and therefore $q.parent$ will be in $Z_i^j$.  

Now consider the first point we encounter on $R(p)$ that is not in $Z_i^j$ when walking along $R(p)$ towards $o$.  This point is either $v_i^{j+1}$ or $v_i^j$ (depending on if $j$ is odd or even).  If this point has some coordinate that is 1, then we are done, so suppose it doesn't and therefore has a zone.  In the case where $j$ is even, then we reach $v_i^{j+1}$ ``from above''.  Note that the ``top line'' of $v_i^{j+1}$'s zone is the same as the ``top line'' of $p$'s zone (the points used to define the respective lines have the same slope).  Therefore $\ell(p)$ will intersect $v_i^{j+1}$'s zone.  Symmetrically, when $j$ is odd then we reach $v_i^j$ ``from the right'', and in this scenario we have that $\ell(v_i^{j+1})$ is the same line as the ``bottom line'' of $v_i^j$'s zone, and therefore $\ell(p)$ intersects $v_i^j$'s zone.  We then can apply these arguments inductively to see that $R(p)$ will stay inside the zones intersected by $\ell(p)$ until we reach a point with some coordinate that is 1. 

\subparagraph*{If $R(p)$ contains a point $q$ that is distance at least 1.5 to one of the lines $\ell$ defining its zone, then every $q'$ such that $D(q) < D(q') \leq D(p)$ is in the same zone and is distance at least 1.5 from $\ell$.}  Consider such a $q$, and let $Z_i^j$ denote the zone of $q$.  Note that by the above argument, it must be that $\ell(p)$ intersects $Z_i^j$.  If $q$ is $p$ then we are done, so suppose $q$ is not $p$.  Then some point $q' \in R(p)$ picked $q$ to be its parent.  We will first argue that $q'$ must also be in $Z_i^j$.  Of course it must be in a zone intersected by $\ell(p)$, and therefore it cannot be in $Z_i^{j-1}$ or $Z_i^{j+1}$.  Moreover the only point in $Z_i^j$ that is chosen to be the parent of a point in a different zone is the split point on diagonal $2^{i+1}$, but this point is distance 1 to both lines defining the zone, and therefore cannot be $q$.  This implies that $q'$ must also be in $Z_i^j$.  

Now suppose without loss of generality we have that the distance from $q$ to $\ell(v_i^{j+1})$ is at least 1.5.  We will show that the distance from $q'$ to $\ell(v_i^{j+1})$ is also at least 1.5.  We will do this by showing that $q'$ must be $q^\uparrow$.  Note that when $q'$ picked $q$ as its parent, it must have done so in the last line of Algorithm \ref{alg:construct} since $D(q)$ cannot be a power of 2.  Since the distance from $q$ to $\ell(v_i^{j+1})$ is at least 1.5 it must be that $q^\searrow$ is also in $Z_i^j$.  Moreover since the distance between $\ell(v_i^j)$ and $\ell(v_i^{j+1})$ is less than 2 on $D(q)$ it must be that the distance from $q$ to $\ell(v_i^j)$ is less than 0.5, which implies that the distance from $q^\searrow$ to $\ell(v_i^j)$ is less than 1.5.  So which point could have picked $q$ as its parent?  It could not have been $q^\rightarrow$, as it would have preferred $q^\searrow$ as its parent over $q$.  Therefore it must have been $q^\uparrow$ that picked $q$ as its parent, and $q^\uparrow$ will be farther from $\ell(v_i^{j+1})$ than $q$.

\subparagraph*{Putting it all together.}  Now consider $R(p)$, let $Z_i^j$ denote $p$'s zone, and consider any point $q \in R(p)$.  We will argue that the distance from $q$ to $\ell(p)$ is less than 1.5 on $D(q)$.  If $q$ is distance at least 1.5 from one of the lines for its zone, then by the previous argument we have that $p$ is in the same zone as $q$ and is distance more than 1.5 from the same line.  Since the ``width'' of the zone at $D(q)$ is less than 2, this implies that the distance from $q$ to $\ell(p)$ on $D(q)$ is less than 0.5.  So now assume that $q$ is less than 1.5 to both lines of its zone.  Since we know that $\ell(p)$ intersects the zone of $q$, it directly follows that the distance from $q$ to $\ell(p)$ on $D(q)$ is less than 1.5.
\end{proof}

\section{Lower bound}
\label{sec:lb}

\begin{figure}
\centering
\includegraphics[scale=0.7]{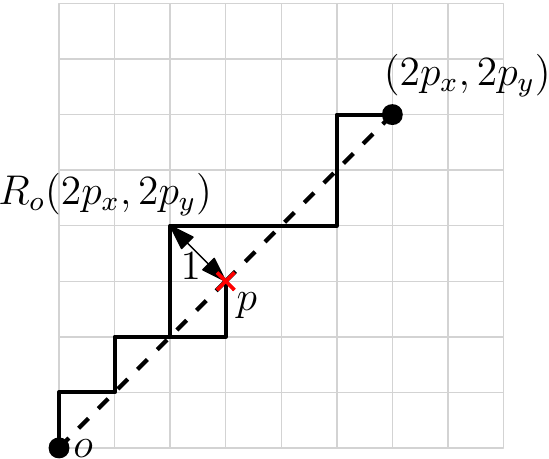}
\caption{Illustrating for any inner leaf $p$ the segment $R_o(2p_x,2p_y)$ must have error of at least $1$.}
\label{fig:error1}
\end{figure}

In this section, we prove Theorem \ref{thm:lowerBound} which implies that Algorithm \ref{alg:construct} produces a WCDR with an optimal error of 1.5. We begin with some trivial lower bounds, followed with a high level sketch of our proof of a tight lower bound, followed by the technical details.

\subsection{Trivial Lower Bounds}
There is no known previous work in obtaining a lower bound. There is a trivial lower bound of 0.5 for the segment $R((1,1))$.  There are two options for the segment (pass through either $(0,1)$ or $(1,0)$), and both of them have an error of 0.5.  We can also obtain a fairly easy lower bound of 1 by considering the effects of inner leaves.  If a WCDR does not have any inner leaves, then it satisfies (S4) as well and therefore is actually a CDR and therefore has $\Omega(\log n)$ error, so consider an WCDR, $R_o$, that has an inner leaf $p$.  We can show that it must have an error of at least 1 in the following way. Consider the segment $R_o((2p_x, 2p_y))$. Trivially, $\ell(2p_x, 2p_y)$ passes through $p$, but by assumption $p$ is an inner leaf and therefore $R_o((2p_x, 2p_y))$ must pass through a point on $D(p)$ that is either ``above'' $p$ or ``below'' $p$, see Figure \ref{fig:error1}. Hence, any WCDR must have error of at least 1.

So obtaining a lower bound of 0.5 is trivial, improving it to 1 is fairly simple, but improving it to 1.5 (which is tight given our construction) is more technical.

\subsection{Tight Lower Bound Preliminaries}

We begin with some definitions that we use in the lower bound proof.  Fix any WCDR in $\mathbb{Z}^2$.  Recall we are proving that for any $\epsilon > 0$, it is not possible for the WCDR to have error at most $1.5-\epsilon$.  For a sufficiently large integer $N$ (which depends on $\epsilon$), we will ``cut off'' the WCDR at diagonal $N$, obtaining a finite WCDR that is sufficient for proving the lower bound.  This allows us to use maximums and minimums in our definitions rather than supremums and infimums.  Similar to the previous section, we only consider the first quadrant $Q_1$ of $o$.  We say that a point $v$ \textit{extends} to diagonal $d > D(v)$ if there is some point $p$ with $D(p) = d$ such that $v \in R(p)$.  Similarly, if we say that $v$ extends to $p$ if $v \in R(p)$.  We let $Subtree(v)$ denote the set of all grid points that $v$ extends to.  For any subset $S$ of the grid points, we define $Cone(S)$ to denote all points that are between $\ell(t)$ and $\ell(r)$, where $t$ is the point in $S$ with maximum slope, and $r$ is the point in $S$ with minimum slope.  We define $ConeWidth(S,d)$ to be the distance between $\ell(t)$ and $\ell(r)$ on diagonal $d$.  For a split point $s$, we call $Subtree(s^\rightarrow)$ the \textit{bottom branch} of $s$, and we call $Subtree(s^\uparrow)$ the \textit{top branch} of $s$.

\begin{observation}

Given a set of points, $S$, the number of grid points between $Cone(S)$ on diagonal $d$ must be either $\lfloor ConeWidth(S,d) \rfloor$ or $\lfloor ConeWidth(S,d) \rfloor + 1$.

\label{obs:num_grid_points_between_cone}

\end{observation}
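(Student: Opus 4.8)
The plan is to reduce the statement to the elementary fact that a half-open real interval of length $L$ contains either $\lfloor L \rfloor$ or $\lfloor L \rfloor + 1$ integers. First I would fix the diagonal $d$ and parametrize the grid points on it by their $x$-coordinate: the grid points on diagonal $d$ are exactly the points $(x, d-x)$ for integers $x$, and any two consecutive such points are at $L_\infty$ distance $1$. Thus, when distances along a diagonal are measured in $L_\infty$ (as they are in the definition of $ConeWidth$), the parametrization by $x$ identifies the grid points on diagonal $d$ with the integers in an isometric way.

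Next I would locate the two boundary lines of the cone on this diagonal. Let $t$ be the maximum-slope point of $S$ and $r$ the minimum-slope point, so that $Cone(S)$ is the region between $\ell(t)$ and $\ell(r)$. Let $a$ and $b$ denote the $x$-coordinates at which $\ell(t)$ and $\ell(r)$, respectively, meet diagonal $d$; solving $x + M(t)\,x = d$ gives $a = d/(1 + M(t))$ and similarly $b = d/(1 + M(r))$ (with $a = 0$ in the degenerate case $M(t) = \infty$). Since $M(t) \ge M(r)$ we have $a \le b$, and since both crossing points lie on diagonal $d$, their $L_\infty$ distance is exactly $b - a$; hence $ConeWidth(S,d) = b - a$.

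The crux of the argument, and the step I expect to require the most care, is to verify that a grid point $(x, d-x)$ lies in $Cone(S)$ exactly when $a \le x < b$. Using the definitions in Section~\ref{sec:notation}, $(x, d-x)$ is below $\ell(t)$ iff its slope $(d-x)/x$ is at most $M(t)$, which after clearing denominators is equivalent to $x \ge a$; and it is above $\ell(r)$ iff $(d-x)/x > M(r)$, equivalent to $x < b$. The asymmetry (closed on the left, open on the right) comes directly from the paper's tie-breaking convention that a point lying on a line is counted as \emph{below} that line. Keeping this bookkeeping straight, especially in the boundary cases where $a$ or $b$ is itself an integer, is the only genuinely delicate part of the proof.

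Finally I would invoke the counting fact: the number of integers in the half-open interval $[a, b)$ is $\lceil b \rceil - \lceil a \rceil$, which is readily seen to equal either $\lfloor b - a \rfloor$ or $\lfloor b - a \rfloor + 1$, regardless of whether $b - a$ is an integer. Since $b - a = ConeWidth(S,d)$, this yields exactly the claimed count of $\lfloor ConeWidth(S,d) \rfloor$ or $\lfloor ConeWidth(S,d) \rfloor + 1$ grid points of $Cone(S)$ on diagonal $d$, completing the proof.
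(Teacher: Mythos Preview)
Your argument is correct. The paper states this as an \emph{observation} and gives no proof at all, so there is nothing to compare against; your write-up is precisely the elementary expansion that the authors evidently had in mind, carried out carefully with the paper's tie-breaking convention to get the half-open interval $[a,b)$ and then the standard count $\lceil b\rceil-\lceil a\rceil\in\{\lfloor b-a\rfloor,\lfloor b-a\rfloor+1\}$.
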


Let $Inter(\ell(p),d)$ be the point $\ell(p)$ intersects diagonal $d$. 

\begin{observation}
Given a point $p$ and a diagonal $d$, the x-coordinate of the 
$Inter(\ell(p),d)$ is $d\frac{p_x}{p_x + p_y}$, and the y-coordinate of
$Inter(\ell(p),d)$ is $d\frac{p_y}{p_x + p_y}$.

\label{obs:get_coordinates_slope_of_line}
\end{observation}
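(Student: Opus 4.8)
The plan is to write $\ell(p)$ in parametric form and intersect it with the defining equation of the diagonal. By definition $\ell(p)$ is the Euclidean line through the origin $o = (0,0)$ and the point $p = (p_x, p_y)$, so every point of $\ell(p)$ can be written as $(t\,p_x,\; t\,p_y)$ for a real parameter $t$. The diagonal $d$ is the line $x + y = d$, so computing $Inter(\ell(p),d)$ reduces to finding the unique value of $t$ for which the parametrized point satisfies this equation.

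First I would substitute the parametrization into $x + y = d$, obtaining $t\,p_x + t\,p_y = d$, i.e.\ $t\,(p_x + p_y) = d$. Since $p \in Q_1$ with $p \neq o$, we have $p_x + p_y > 0$, so this solves uniquely to $t = \frac{d}{p_x + p_y}$. Then I would substitute this value back into the parametrization to read off the coordinates: the x-coordinate is $t\,p_x = d\,\frac{p_x}{p_x + p_y}$ and the y-coordinate is $t\,p_y = d\,\frac{p_y}{p_x + p_y}$, which match the claimed expressions exactly.

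There is no substantive obstacle here beyond confirming that the intersection is well-defined: the denominator $p_x + p_y$ is nonzero precisely because $p$ lies in the first quadrant and is distinct from $o$, guaranteeing that the parameter $t$ is unique and hence that $Inter(\ell(p),d)$ is a single point. As a sanity check, the two coordinates sum to $d$, confirming that the point indeed lies on diagonal $d$; the computation is otherwise entirely routine.
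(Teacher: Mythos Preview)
Your argument is correct; the paper states this as an observation without proof, and your parametrization is exactly the routine verification one would supply.
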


We will use the following lemma in our lower bound proof.

\begin{lemma}
Suppose we have a WCDR in $\mathbb{Z}^2$ with error less than 1.5, and further suppose there is a point $v$ that does not extend to diagonal $d$ for some $d > D(v)$.  Then $v^\nwarrow$ extends to diagonal $d$ (if it is in $Q_1$), and $v^\searrow$ extends to diagonal $d$ (if it is in $Q_1$).
\label{lem:noConsecutiveDeadNodes}
\end{lemma}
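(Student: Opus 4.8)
The plan is to argue by contradiction, establishing the contrapositive that two diagonal-adjacent points cannot both fail to extend. Assume the WCDR has error less than $1.5$, that $v$ does not extend to $d$, and suppose toward a contradiction that $v^\searrow$ also fails to extend to $d$; the argument for $v^\nwarrow$ is symmetric. The first step is a reduction that frees me from the particular value of $d$: since $v$ does not extend to $d$, the subtree rooted at $v$ has maximum diagonal strictly below $d$, so $v$ fails to extend to \emph{every} diagonal $d' \ge d$, and likewise for $v^\searrow$. I will exploit this by choosing a large witness diagonal $d'$ (still below the cutoff $N$) rather than working at $d$ itself.

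Next I would invoke the tree and non-crossing structure. As established from (S3), the segments form a tree and no two cross, so the map sending each grid point of a diagonal to its unique ancestor on diagonal $D(v)$ is monotone in the $x$-coordinate. Because the axis points are always forced to extend, $v$ and $v^\searrow$ sit inside a maximal block of consecutive points of diagonal $D(v)$ that do not extend to $d'$, flanked by extending points of $x$-coordinate at most $v_x-1$ and at least $v_x+2$. Hence every grid point of diagonal $d'$ has its diagonal-$D(v)$ ancestor at $x$-coordinate $\le v_x-1$ or $\ge v_x+2$. Now I turn to the midpoint direction: the ray from $o$ through $(v_x+\tfrac12,\,v_y-\tfrac12)$ crosses diagonal $D(v)$ at $x=v_x+\tfrac12$, which lies at $L_\infty$-distance exactly $\tfrac32$ from both $v_x-1$ and $v_x+2$. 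Using the closed form for $Inter(\ell(\cdot),\cdot)$, I take $t$ to be the grid point of diagonal $d'$ whose line is closest to this midpoint ray; then $\ell(t)$ meets diagonal $D(v)$ within $\tfrac{D(v)}{2d'}$ of $x=v_x+\tfrac12$. Since the point of $R(t)$ on diagonal $D(v)$ is one of the flanking ancestors, its $L_\infty$-distance to $\ell(t)$ there—a lower bound for the error of $R(t)$—is at least $\tfrac32-\tfrac{D(v)}{2d'}$.

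The main obstacle is the exact constant: the estimate $\tfrac32-\tfrac{D(v)}{2d'}$ is always strictly below $\tfrac32$, which is precisely why no purely local estimate on a fixed diagonal can forbid adjacent non-extending points. This is where the reduction of the first step pays off. Within the lower-bound setup the standing hypothesis is really that the error is at most $1.5-\epsilon$ up to the cutoff $N$, so I would pick the witness diagonal $d'$ large enough that $\tfrac{D(v)}{2d'}<\epsilon$ (possible since $N$ is chosen large in terms of $\epsilon$); then $\tfrac32-\tfrac{D(v)}{2d'}>1.5-\epsilon$, contradicting the error bound. The cleanest illustration of the mechanism is the exact case $d'=2D(v)$, where the grid point $(2v_x+1,\,2v_y-1)$ lies exactly on the midpoint ray and is forced through a flanking ancestor at distance exactly $\tfrac32$. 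Applying the same reasoning on the other side shows $v^\nwarrow$ extends to $d$ as well, completing the proof.
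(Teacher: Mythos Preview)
Your core mechanism—take the midpoint ray through $(v_x+\tfrac12,v_y-\tfrac12)$, observe that any segment to a point near this ray must pass through diagonal $D(v)$ at a grid point with $x\le v_x-1$ or $x\ge v_x+2$, and read off an error of roughly $1.5$—is exactly the paper's idea. Where you diverge is in how you close the gap to a strict contradiction. The paper does not approximate at all: it simply notes that the midpoint slope $(2v_y-1)/(2v_x+1)$ is rational, so there is a grid point $p$ \emph{exactly} on the midpoint ray on some diagonal $d'>d$ (any integer multiple of $(2v_x+1,2v_y-1)$ works, so one can force $d'>d$ even when $2D(v)<d$). For that $p$, both $v$ and $v^{\searrow}$ are at $L_\infty$-distance $0.5$ from $\ell(p)$, and since neither can lie on $R(p)$, the point of $R(p)$ on diagonal $D(v)$ is at distance at least $1.5$ from $\ell(p)$—directly contradicting error $<1.5$. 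Your ``cleanest illustration'' at the end \emph{is} this argument; you should have led with it rather than treating it as a side remark.

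Your main line through the estimate $1.5-\tfrac{D(v)}{2d'}$ and an $\epsilon$-gap is workable but muddled. The lemma is stated for a full WCDR in $\mathbb{Z}^2$ with hypothesis ``error $<1.5$'', i.e.\ the supremum of segment errors is some $c<1.5$; that already hands you $\epsilon=1.5-c>0$, and since there is no cutoff you may take $d'$ as large as you like. Invoking the lower-bound setup and the truncation at $N$ is the wrong justification (and would actually be problematic there, since in the application the relevant $v$ sits near diagonal $N'$, so you cannot make $D(v)/(2d')$ small while keeping $d'\le N$). So your route is salvageable, but the rationality/exact-hit argument you relegated to an illustration is both cleaner and is precisely what the paper does.
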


\begin{proof}
Suppose the contrary.  Without loss of generality, suppose that $v^\searrow$ also does not extend to diagonal $d$.  Then on some diagonal $d' > d$ there will be a grid point $p$ such that $M(p) = \frac{M(v) + M(v^\searrow)}{2}$ (because the slopes are rational).  Then $v$ and $v^\searrow$ are both distance 0.5 to $\ell(p)$, but $R(p)$ cannot contain $v$ or $v^\searrow$.  Therefore no matter which point from diagonal $D(v)$ is on $R(p)$, we must have that the error at that point is at least 1.5, a contradiction.
\end{proof}

\subsection{Proof Sketch}

We now give a high level overview of our lower bound proof.  Suppose we have any WCDR construction with error at most 1.5.  We will show that there is some point $p \in \mathbb{Z}^2$ such that 

\begin{enumerate}
    \item $p_y > p_x$ but $M(p)$ is ``very close'' to 1,
    \item $ConeWidth(Subtree(p), D(p))$ is ``very close'' to 2,
    \item and $p$ is ``very close'' to the center of $Cone(Subtree(p))$ on diagonal $D(p)$.
\end{enumerate}

Suppose we can show the existence of such a point $p$.  We will show that this implies that no matter which point $p$ picks as its parent, the error of that choice will be ``very close'' to 1.5.  See Figure \ref{fig:pRuinsEverythingAndconewidth} (a).

Let $t$ be the point in $Subtree(p)$ with maximum slope, and let $r$ be the point in $Subtree(p)$ with minimum slope.  From the assumptions on $p$, we have that $\ell(t)$ is a distance of close to 1 ``above'' $p$ on $D(p)$, and $\ell(r)$ is a distance of close to 1 ``below'' $p$ on $D(p)$. Moreover $M(p)$ is ``very close'' to 1 (which implies that $M(t)$ and $M(r)$ also are ``very close'' to 1 if $D(p)$ is ``sufficiently large'').  Whichever point $p$ picked as its parent will be on $R(t)$ and $R(r)$.  But if $p$ picks $p^\leftarrow$ as its parent, $p^\leftarrow$ will have a distance that is ``very close'' to 1.5 from $\ell(r)$.  If $p$ picks $p^\downarrow$ as its parent, $p^\downarrow$ will have a distance that is ``very close'' to 1.5 from $\ell(t)$.  Therefore the WCDR will have to have an error that is ``very close'' to 1.5 no matter which choice $p$ made.  

\begin{figure}
\centering
\begin{tabular}{c@{\hspace{0.1\linewidth}}c}

\includegraphics[scale=1.2]{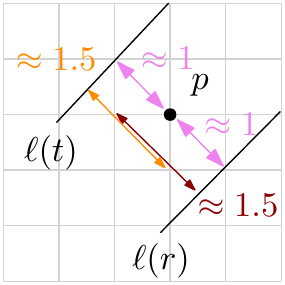}&
\includegraphics[scale=0.8]{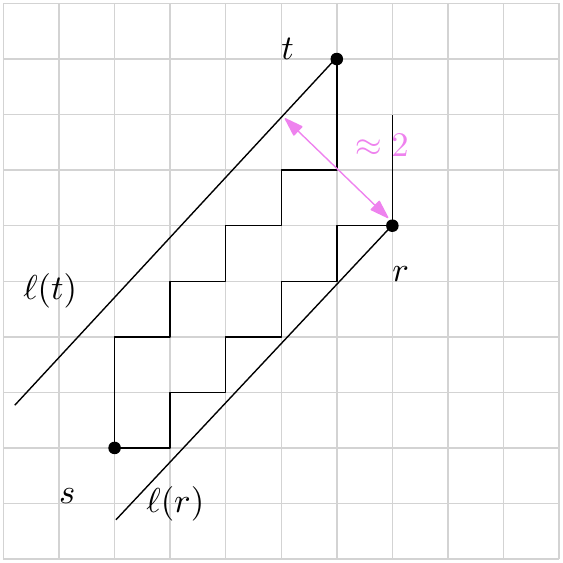}
\\
(a) & (b)
\end{tabular}
\caption{ (a) $p$ satisfies the three properties above and no matter which parent it picks, the error will be close to 1.5.  (b) $ConeWidth(Subtree(s), D(r))$ is close to 2.  $D(r) - D(s)$ will be sufficiently small so that $ConeWidth(Subtree(s), D(s))$ is also close to 2.}
\label{fig:pRuinsEverythingAndconewidth}
\end{figure}

Now to show that for any $\epsilon > 0$ that there is no WCDR with error at most $1.5 - \epsilon$, we pick a $\delta$ such that $0 < \delta < \min\{\epsilon,0.1\}$ and we show that such a point $p$ exists where ``very close'' is a function of $\delta$.  Then the above analysis will show that no matter which point $p$ picks as its parent, the error will have to be at least $1.5 - \delta$ > $1.5 - \epsilon$.  

We show that $p$ exists by showing that there must be a split point $s$ that satisfies:

\begin{itemize}
    \item $1 + \frac{2\delta}{3-2\delta} < M(s) < 1 + \frac{4\delta}{3-2\delta}$
    \item $ConeWidth(Subtree(s),D(s))$ is ``very close'' to 2.
\end{itemize}

Since $M(s) > 1$, eventually it will either have to make two consecutive vertical extensions in its bottom branch, or it will have to have a split point in $Subtree(s)$.  We can show that this will imply that $ConeWidth(Subtree(s),D(s))$ is ``very close'' to 2.  For example, see Figure \ref{fig:pRuinsEverythingAndconewidth} (b)
which shows a split point with two consecutive vertical extensions in its bottom branch.  It then follows that $s$ must extend to the points $t$ and $r$, where $r$ is the point at the ``bottom'' of the double vertical extension in the bottom branch of $s$, and $t$ is $((r^\uparrow)^\uparrow)^\nwarrow$.  Since $M(s), M(r)$ and $M(t)$ are all ``very close'' to 1, it follows that $ConeWidth(Subtree(s), D(r))$ is ``very close'' to 2, and we in fact will show that $ConeWidth(s, D(s))$ remains ``very close'' to 2 as $D(r)$ is sufficiently close to $D(s)$.  Recall we are looking for a point $p$ with 3 properties.  Certainly $s$ satisfies both properties (1) and (2) above, but it may not satisfy (3).  But we will show that we do not have to go ``too many'' diagonals before $D(s)$ before we must find a point $p \in R(s)$ that is close to the center of its cone and still satisfies (1) and (2).  In order to show this last part, we need $D(s)$ to be sufficiently large.

\subsection{Formal Proof}
For any $\epsilon > 0$, fix any $\delta$ such that $0 < \delta < \min\{\epsilon, 0.1\}$.  The following lemma proves the existence of the split point $s$ as we described in the proof sketch.  

\begin{lemma}
For any WCDR in $\mathbb{Z}^2$ with error less than 1.5 and any $0 < \delta \leq 0.1$ there is a split point $s$ that satisfies the following:

\begin{enumerate}
\item   For every $p \in Subtree(s)$, $1 + \frac{2\delta}{3-2\delta} < M(p) < 1 + \frac{4\delta}{3-2\delta}$
   
    \item $ConeWidth(Subtree(s),D(s)+ \frac{7+\delta}{\delta}) > 2-\frac{2\delta}{3}$
    \item $\frac{63}{\delta^2} \leq D(s) \leq \frac{382}{\delta^2 - \frac{2}{3}\delta^3} $ 
    
\end{enumerate}
\label{lem:lowerBound}
\end{lemma}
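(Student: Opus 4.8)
The plan is to locate the split point $s$ by working in the thin slope band just above the main diagonal and then to read properties (1)--(3) off its local structure. Write $\alpha = 1 + \frac{2\delta}{3-2\delta}$ and $\beta = 1 + \frac{4\delta}{3-2\delta}$ for the two slopes in property (1), and let $\ell_\alpha,\ell_\beta$ be the lines through $o$ of these slopes. First I would record that the WCDR tree is non-crossing: by (S3) the rays form a tree, and two monotone grid paths that diverge at a common ancestor (one going up, one going right) can never share a vertex again, so they stay vertically ordered. Hence if a point lies between two rays $R(a)$ and $R(c)$ on its diagonal, its entire subtree stays between $R(a)$ and $R(c)$ on every later diagonal. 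Choosing $a,c$ on the cutoff diagonal $N$ with slopes just inside the band, and using that each ray stays within $L_\infty$ distance $1.5$ of its line, this traps any subtree rooted between them inside the open band $(\alpha,\beta)$ once the diagonal is at least $\tfrac{63}{\delta^2}$, since there the $1.5$ wiggle is only $O(\delta^2)$ in slope, far below $\beta-\alpha=\tfrac{2\delta}{3-2\delta}$. This is the mechanism behind property (1).

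To produce the split point itself I would force an inner leaf close to $\ell_\alpha$. Consider the rays to points in a narrow cone hugging $\ell_\alpha$ from above, over the diagonal window $[\tfrac{63}{\delta^2},\ \tfrac{382}{\delta^2-\frac23\delta^3}]$. If this cone had no inner leaf across the whole window, then every point of the cone would extend, i.e.\ prolongation would hold locally and the single-split-point-per-diagonal structure of a CDR would be inherited there; the recursive argument behind the $\Omega(\log n)$ bound of Chun et al.~\cite{ChunKNT09} would then force some ray in the cone to have a point at distance exceeding $1.5$ from its line over a window of multiplicative length $\approx 6$, contradicting error $<1.5$. Thus the cone contains an inner leaf in the window, and by the alternating ``split point / inner leaf'' pattern on a diagonal there is an adjacent split point $s$ in the cone. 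This $s$ has $D(s)$ in the window (property (3)), and being within a few grid steps of $\ell_\alpha$ it has $M(s)$ within $O(\delta^2)$ of $\alpha$, so $M(s)\in(\alpha,\beta)$; combined with the trapping above this yields property (1).

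For property (2) I would exploit $M(s)>1$ together with the fact that the bottom branch $Subtree(s^\rightarrow)$ must stay above $\ell_\alpha$. Tracking $\phi(q)=q_y-\alpha\,q_x$ along the bottom branch, a right step changes $\phi$ by $-\alpha$ and an up step by $+1$, so any pattern with no two consecutive up steps makes $\phi$ decrease on average; since $M(s)$ is within $O(\delta^2)$ of $\alpha$ the initial value $\phi(s^\rightarrow)$ is only $O(1)$, so within $\tfrac{7+\delta}{\delta}$ diagonals the branch must make two consecutive vertical steps or contain a split point. In the double-vertical case, with $r$ the lower of the two up-steps, the points $r,\,r^\uparrow,\,(r^\uparrow)^\uparrow$ all lie in $Subtree(s)$, and I would use Lemma~\ref{lem:noConsecutiveDeadNodes} and the trapping to put the higher-slope point $t:=((r^\uparrow)^\uparrow)^\nwarrow$ in $Subtree(s)$ as well (the sub-split case being analogous). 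With $t=(r_x-1,r_y+3)$, Observation~\ref{obs:get_coordinates_slope_of_line} gives $ConeWidth(\{t,r\},d)=d\cdot\frac{D(r)+2r_x}{D(r)(D(r)+2)}$, which is $\approx\frac{d}{D(r)}\bigl(2-\tfrac{\delta}{3}\bigr)$ because $M(r)$ is within $O(\delta^2)$ of $1$; evaluating at $d=D(s)+\tfrac{7+\delta}{\delta}$ and using $D(r)-D(s)\le\tfrac{7+\delta}{\delta}$ with $D(s)\ge\tfrac{63}{\delta^2}$ then forces $ConeWidth(Subtree(s),D(s)+\tfrac{7+\delta}{\delta})>2-\tfrac{2\delta}{3}$, which is property (2).

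The main obstacle is the existence step of the second paragraph: adapting the $\Omega(\log n)$ bound of~\cite{ChunKNT09} to a \emph{local} window and a \emph{narrow} cone near $\ell_\alpha$ so that a single inner leaf is forced, and carrying out the constant bookkeeping tightly enough that the resulting split point lands in $[\tfrac{63}{\delta^2},\ \tfrac{382}{\delta^2-\frac23\delta^3}]$ with slope within $O(\delta^2)$ of $\alpha$ (these three thresholds are what tie the window length, the offset $\tfrac{7+\delta}{\delta}$, and the target $2-\tfrac{2\delta}{3}$ together). A secondary delicate point is justifying $t\in Subtree(s)$ from the double vertical, since a bare application of Lemma~\ref{lem:noConsecutiveDeadNodes} only constrains the neighbours of a non-extending point; ruling out the alternative placements of $t$ is exactly where the trapping must be invoked.
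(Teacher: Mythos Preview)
Your outline has a real gap, and it is precisely at the point you flag as ``the main obstacle.'' The paper does \emph{not} manufacture $s$ by forcing an inner leaf near $\ell_\alpha$ and taking an adjacent split point. It fixes a set $\Sigma$ of $22$ consecutive grid points on a carefully chosen diagonal $N\approx \tfrac{272}{\delta^2}$ inside the slope band and runs a direct pigeonhole count: using the error bound and Lemma~\ref{lem:noConsecutiveDeadNodes}, at least $9$ points on diagonal $N'=\lceil\tfrac{20}{21}N\rceil$ must extend to $\Sigma$, while at most $8$ points on diagonal $N''=\lfloor\tfrac{5}{21}N\rfloor$ can (the cone around $\Sigma$ is too narrow there). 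Hence two of these paths merge between $N''$ and $N'$, producing a split point $s$ with the essential extra feature that \emph{both} $s^\uparrow$ and $s^\rightarrow$ extend all the way to $\Sigma$ on diagonal $N$. Your split point, being merely adjacent to an inner leaf, carries no such long-range guarantee: either child of $s$ could itself die within a few diagonals.

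This long-range persistence of both branches is exactly what drives (2). Your $\phi$-tracking of the bottom branch to locate two consecutive up-steps already presupposes that $Subtree(s^\rightarrow)$ survives for $\tfrac{7+\delta}{\delta}$ diagonals; the paper gets this from $N-D(s)>\tfrac{7+\delta}{\delta}$. More importantly, once the double vertical $r,r^\uparrow,r^{\uparrow\uparrow}$ is found in the bottom branch, the paper does \emph{not} argue that $t=((r^\uparrow)^\uparrow)^\nwarrow$ lies in $Subtree(s)$. It uses that $s^\uparrow$ extends to $N$, hence to $D(t)$, and then non-crossing forces that descendant to sit at $t$ or higher; the cone width follows. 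Lemma~\ref{lem:noConsecutiveDeadNodes} plus trapping cannot substitute for this, because both ingredients are conditional on the top branch reaching $D(t)$ in the first place, which is precisely what your choice of $s$ fails to supply.

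Separately, the existence-by-contradiction via a local $\Omega(\log n)$ bound is not safe as written: the window has multiplicative length about $6$, so you would need the constant in the Chun et al.\ bound to exceed $1.5/\log 6$, and you would need that bound to apply to a thin cone rather than a full quadrant while points may still leave the cone sideways. The paper's $9$-versus-$8$ count avoids all of this and simultaneously delivers the both-branches-reach-$N$ property you are missing.
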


\begin{proof}

Fix any $0 < \delta \leq 0.1$. Suppose there exists a WCDR with error at most 1.5-$\delta$.  We prove the lemma by first showing how to pick the split point $s$ from the WCDR, and then we show that this choice of $s$ must satisfy the three conditions.

\subparagraph*{Picking the split point $s$.}  Let $x := \lceil \frac{200}{\delta^2} \rceil$, $y := x + \lceil \frac{600 - 182 \delta}{\delta (3 - 2\delta)} \rceil$, and $N := x+y$. Let $\Sigma$ denote the subset of consecutive grid points along diagonal $N$  from the grid point $(x-21,y+21)$ to $(x,y)$. Let $\ell(t)$ and $\ell(r)$ be the Euclidean lines as previously defined for $Cone(\Sigma)$. That is $t$ is the point in $\Sigma$ with maximum slope, and $r$ is the point in $\Sigma$ with minimum slope. See Figure \ref{fig:sigma_points}. We will first show that (3) from Lemma \ref{lem:lowerBound} must be satisfied. Let $N' := \lceil 20 \frac{N}{21} \rceil$, and let $\Sigma'$ denote the set of points between $\ell(t)$ and $\ell(r)$ on $N'$.
Notice that the width of $Cone(\Sigma)$ increases by $1$ every $\frac{N}{21}$ diagonals. That is to say that $ConeWidth(\Sigma,\frac{20}{21} N) = ConeWidth(\Sigma,N) -1 $ = 20.   
It follows  that $ConeWidth(\Sigma, N') \geq 20$ which implies from Observation \ref{obs:num_grid_points_between_cone} that $\vert \Sigma' \vert \geq 20$.

\begin{figure}
    \centering
    \includegraphics[scale=0.8]{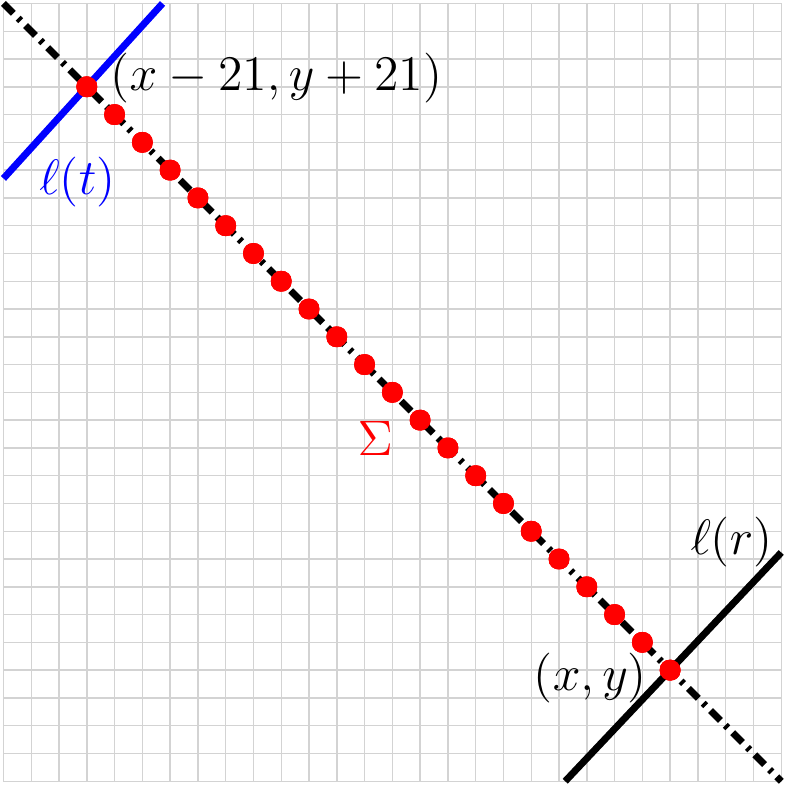}
    \caption{An illustration of the consecutive grid points in $\Sigma$.}
    \label{fig:sigma_points}
\end{figure}

We will lower bound the number of points in $\Sigma'$ that must extend to a point in $\Sigma$. Let $p$ be a point in $\Sigma'$.  There are three cases for why $p$ does not extend to a point in $\Sigma$. The first is that $p$ extends to a point above $\ell(t)$, the second is $p$ extends to a point below $\ell(r)$, and the third is that $p$ does not extend to $N$. 
Let us consider the first case. 
Let $p$ extend to a point $q$, such that $q \notin\Sigma$, $D(q) = N$, and $q$ is above $\ell(t)$.
Consider the point $w := (x-22, y+22)$. Notice that $w^\searrow$ is on $\ell(t)$ and $D(w) = N$. It follows that $M(q) \geq M(w)$.

 We would like to know the difference in x-coordinates  when $\ell(w)$ and  $\ell(w^\searrow)$ intersect $N'$.  
 Recall that $N' = \lceil \frac{20}{21}N \rceil$, which implies that $N' \geq  \frac{20}{21}N $. We know that the difference in x-coordinates of  $w^\searrow$ and $w$ is $1$.  It then follows that the difference in x-coordinates of $Inter(\ell(w^\searrow), N')$ and $Inter(\ell(w), N')$ is at least $\frac{20}{21}$. 
 Therefore error of $R(q)$ is at least the difference between $p_x$ and the x-coordinate of $Inter(\ell(w^\searrow), N')$ plus $\frac{20}{21}$. It follows that $p$ can only be the point with the largest slope in $\Sigma'$, otherwise error for $R(q)$ is greater than $1 + \frac{20}{21}$. See Figure \ref{fig:w_and_w_searrow}. We now analyze case two with a similar argument. Let $l:= (x+ 1, y - 1)$. If $p$ extends to a point on $N$ that is below $\ell(r)$, then error of $R(q)$ is at least the difference between the x-coordinate of $Inter(\ell(l^\nwarrow),N')$ and $p_x$ plus $\frac{20}{21}$. It then follows that $p$ can only be the point with the smallest slope in $\Sigma'$, otherwise error for $R(q)$ is greater than $1 + \frac{20}{21}$. Therefore there are only two points in $\Sigma'$ that can extend to a point on $N$ that is not in $\Sigma$. It follows that there are at least $20 - 2 = 18$ points in $\Sigma'$ that either extend to a point in $\Sigma$ or do not extend to $N$. 
 If a point $p \in \Sigma'$ does not extend to $N$ then from Lemma \ref{lem:noConsecutiveDeadNodes} it must be that $p^\nwarrow$ and $p^\searrow$ extend to $N$. It then follows that the number of points that must extend to $\Sigma$ is at least $\frac{18}{2}= 9$.

\begin{figure}
    \centering
    \includegraphics[scale=1]{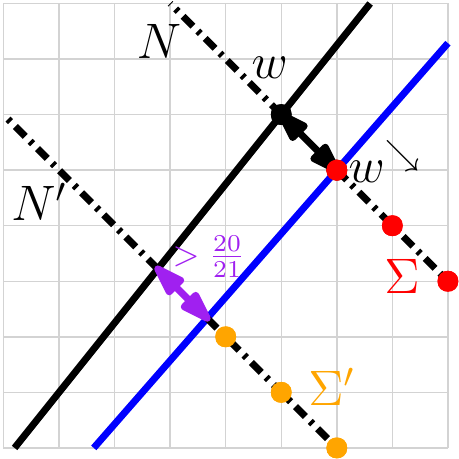}
    \caption{An example of why $p$ must be the point with the highest slope in $\Sigma'$. Note that the difference between $N$ and $N'$ is not drawn to scale.}
    \label{fig:w_and_w_searrow}
\end{figure}

Next we will upper bound the number of points on diagonal $N'' := \lfloor 5\frac{N}{21} \rfloor$ that can extend to a point in $\Sigma$. Let $p$ be a point on $N''$ that extends to $\Sigma$. Let $\alpha$ be a cone such that the top edge is $1.5 - \delta$ above $\ell(t)$ on $N''$ and the bottom edge is $1.5 - \delta$ below $\ell(r)$ on $N''$.  If $p$ is more than $1.5 - \delta$ above $\ell(t)$ or more than $1.5 - \delta$ below $\ell(r)$ then the error is more than $1.5 - \delta$. It follows that $p$ must be between the edges of $\alpha$. Recall that the width of $Cone(\Sigma)$ increases by $1$ every $\frac{N}{21}$ diagonals. It follows that $ConeWidth(\Sigma, N'') \leq 5$ which implies that the cone width of $\alpha$ on $N''$ is at most $5 + 3 - 2\delta$. Therefore the cone width of $\alpha$ on $N''$ is less than $8$. It follows from Observation \ref{obs:num_grid_points_between_cone} that the number of grid points between $\alpha$ on $N''$ is at most $8$.  Therefore there are at most $8$ points on $N''$ that can extend to $\Sigma$. See Figure \ref{fig:alpha_cone}.

\begin{figure}
    \centering
    \includegraphics[scale=0.8]{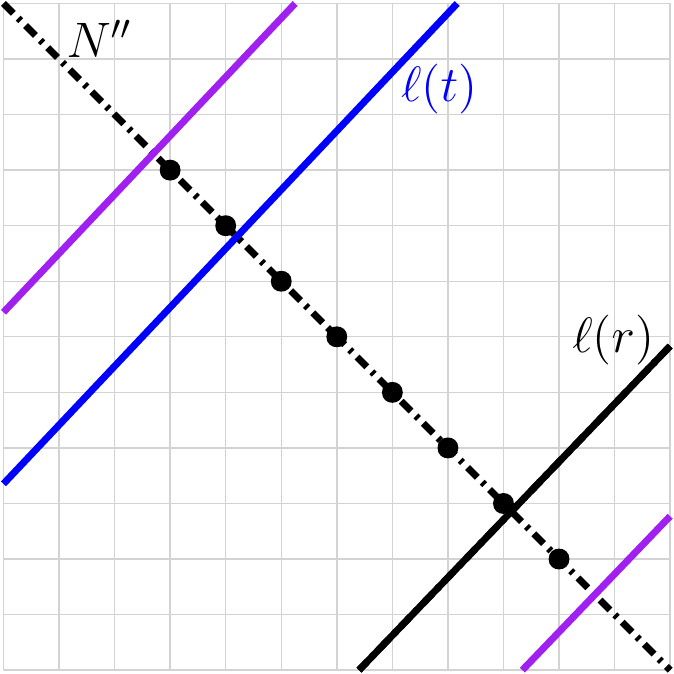}
    \caption{An example of the maximum points on $N''$ that can extend to $\Sigma$. $\alpha$ is represented by the outside purple lines.}
    \label{fig:alpha_cone}
\end{figure}

Since there is at least $9$ points in $\Sigma'$ that extend to a point in $\Sigma$, and there is at most $8$ points on $N''$ that extend to a point in $\Sigma$, it follows that there must be a split point $s$ such that $N'' \leq D(s) < N'$, that has both $s^\uparrow$ and $s^\rightarrow$ extend to at least one point in $\Sigma$.  This is the split point $s$ that we will now show satisfies the three conditions of the lemma.

\subparagraph*{Proving $s$ satisfies (3).}
We have $N'' \leq D(s) < N'$.  We first will show that $\frac{63}{\delta^2} \leq N''$ by showing that $\frac{63}{\delta^2} \leq \lfloor \frac{5}{21}N \rfloor$ is true. We have $N = 2\lceil \frac{200}{\delta^2} \rceil + \lceil \frac{600 - 182\delta}{\delta(3-2\delta)} \rceil \geq \frac{272}{\delta^2}$ for all $\delta \in (0, 0.1]$, and therefore $\frac{63}{\delta^2} \leq \lfloor \frac{5}{21}(\frac{272}{\delta^2}) \rfloor$. Thus $\frac{63}{\delta^2} \leq \lfloor \frac{5}{21}N \rfloor$ is true which implies $ \frac{63}{\delta^2} \leq N''$.

Next we want to show that $N' \leq  \frac{382}{\delta^2 - \frac{2}{3}\delta^3} $ by showing that $\lceil \frac{20}{21}N \rceil \leq \frac{382}{\delta^2 - \frac{2}{3}\delta^3}$. We have $N = 2\lceil \frac{200}{\delta^2} \rceil + \lceil \frac{600 - 182\delta}{\delta(3-2\delta)} \rceil \leq \frac{400}{\delta^2 - \frac{2}{3}\delta^3}$ for all $\delta \in (0, 0.1]$, and $\lceil \frac{20}{21} (\frac{400}{\delta^2 - \frac{2}{3}\delta^3}) \rceil \leq \frac{382}{\delta^2 - \frac{2}{3}\delta^3}$. Thus $\lceil \frac{20}{21}N \rceil\leq \frac{382}{\delta^2 - \frac{2}{3}\delta^3}$ which implies $N' \leq \frac{382}{\delta^2 - \frac{2}{3}\delta^3}$.
This concludes the proof of (3) from Lemma \ref{lem:lowerBound}.

\subparagraph*{Proving $s$ satisfies (1).}
While it is true that both $s^\uparrow$ and $s^\rightarrow$ must extend to at least one grid point in $\Sigma$, it is possible that they also extend to grid points outside of $\Sigma$ on $N$. However, any $\ell(p)$, such that $p \in Subtree(s)$, must cross $D(s)$ at most $1.5 - \delta$ above or below $s$. Moreover, since $s$ extends to at least two grid points in $\Sigma$ it must be that $s$ can be at most $1.5 - \delta$ above $\ell(t)$ or at most $1.5 - \delta$ below $\ell(r)$. It follows that any $\ell(p)$, such that $p \in Subtree(s)$, must cross $D(s)$ at most $3 - 2\delta$ above $\ell(t)$ or at most $3 - 2\delta$ below $\ell(r)$. Recall that we just showed that (3) from Lemma \ref{lem:lowerBound} is true. Therefore to prove (1) from Lemma \ref{lem:lowerBound} we will show that $M(p) < 1 + \frac{4\delta}{3-2\delta}$ for every point $p$ that is $3$ above $\ell(t)$ such that $N'' \leq D(p) < N'$,  and that $M(p) > 1 + \frac{2\delta}{3 - 2\delta}$ for every point $p$ that is $3$ below $\ell(r)$ such that $N'' \leq D(p) < N'$.

\begin{figure}
    \centering
    \includegraphics[scale=1]{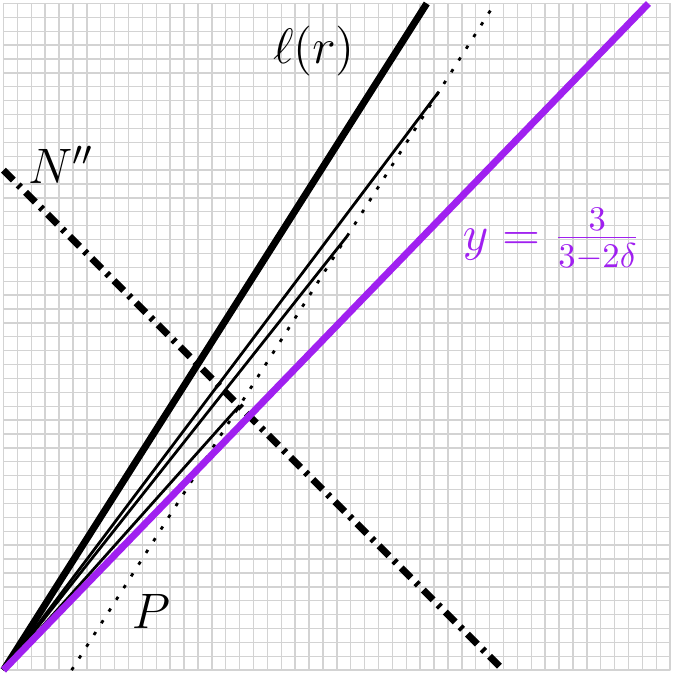}
    \caption{An illustration of $P$ under $\ell(r)$. The Euclidean line with slope $\frac{3}{3-2\delta}$ is represented by the purple line on the bottom.}
    \label{fig:slope_gets_smaller}
\end{figure}

We will begin by first showing that $M(p) > 1 + \frac{2\delta}{3 - 2\delta} = \frac{3}{3-2\delta}$ for every point $p$ that is $3$ below $\ell(r)$ such that $N'' \leq D(p) < N'$. Let $p \in P$ be the set of all points such that $p$ is $3$ below $\ell(r)$ and $N'' \leq D(p) < N'$. Notice that the point $p \in P$ with minimum slope is such that $D(p) = N''$. See Figure \ref{fig:slope_gets_smaller}.
Recall that $M(r) = \frac{y}{x} = \frac{\lceil \frac{200}{\delta^2} \rceil + \lceil \frac{600 - 182\delta}{3\delta - 2\delta^2} \rceil}{\lceil \frac{200}{\delta^2} \rceil} 
\geq \frac{3 + \delta - 0.91\delta^2}{3 - 2\delta + 0.015\delta^2 - 0.01\delta^3}$. From Observation \ref{obs:get_coordinates_slope_of_line} we can obtain the x-coordinate of the intersection of $\ell(r)$ and our bound on $N''$ with $(\frac{63}{\delta^2}) \cdot (\frac{3 - 2\delta + 0.015\delta^2 - 0.01 \delta^3}{6 - \delta - 0.89\delta^2 - 0.01\delta^3}) \leq \frac{189 - 125.055\delta}{6\delta^2 - 1.905\delta^3}$ since $\delta$ is at most $0.1$. Similarly, we can obtain the x-coordinate of the intersection of the line with slope $\frac{3}{3-2\delta}$ and our bound  on $N''$ with ($\frac{63}{\delta^2}) \cdot (\frac{3-2\delta}{6-2\delta}) = \frac{189 - 126\delta}{6\delta^2 - 2\delta^3}$. If $\frac{189 - 126\delta}{6\delta^2 - 2\delta^3} - \frac{189 - 125.055\delta}{6\delta^2 - 1.905\delta^3}  > 3$ is true, then $M(p) > \frac{3}{3-2\delta}$ for all $p \in P$. We have $\frac{189 - 126\delta}{6\delta^2 - 2\delta^3} - \frac{189 - 125.055\delta}{6\delta^2 - 1.905\delta^3}  > 3$ which is true if $\frac{24.57 - 236.16\delta + 140.58\delta^2 + 22.86\delta^3}{72\delta - 46.86\delta^2 - 7.62\delta^3} > 0$. The denominator is always positive since $\delta$ is at most $0.1$. It follows from $\delta \leq 0.1$ that the numerator is always positive. Indeed, $24.57 - 236.16(0.1) + 140.58(0.1)^2 + 22.86(0.1)^3 = 2.38266$. Thus $M(p) > \frac{3}{3-2\delta}$ for all $p\in P$ which implies that for every $p \in Subtree(s)$ it must be that $M(p) > \frac{3}{3 - 2\delta}$.

Next we will show that $M(p) < 1 + \frac{4\delta}{3 - 2\delta} = \frac{3 + 2\delta}{3 - 2\delta}$ for every point $p$ that is $3$ above $\ell(t)$ such that $N'' \leq D(p) < N'$. Let $p \in P'$ be the set of all points such that $p$ is $3$ above $\ell(t)$ and $N'' \leq D(p) < N'$.
Recall that $M(t) = \frac{y + 21}{x - 21}= \frac{\lceil \frac{200}{\delta^2} \rceil + \lceil \frac{600 - 182\delta}{3\delta - 2\delta^2} \rceil + 21}{\lceil \frac{200}{\delta^2} \rceil -21} \leq \frac{3 + \delta }{3 - 2.5\delta}$. From Observation \ref{obs:get_coordinates_slope_of_line} we can obtain the x-coordinate of the intersection of $\ell(t)$ and our bound on $N''$ with $(\frac{63}{\delta^2}) \cdot (\frac{3-2.5\delta}{6 - 1.5\delta}) = \frac{189 - 157.5\delta}{6\delta^2 - 1.5\delta^3}$. Similarly, we can obtain the x-coordinate of the intersection of the line with slope $\frac{3+2\delta}{3-2\delta}$ and our bound on $N''$ with $(\frac{63}{\delta^2})\cdot (\frac{3-2\delta}{6}) = \frac{189 - 126\delta}{6\delta^2}$. If $\frac{189 - 157.5\delta}{6\delta^2 - 1.5\delta^3} - \frac{189 - 126\delta}{6\delta^2} > 3$ is true then $M(p) < \frac{3 + 2\delta}{3-2\delta}$ for all $p \in P'$. We then have $\frac{189 - 157.5\delta}{6\delta^2 - 1.5\delta^3} - \frac{189 - 126\delta}{6\delta^2} > 3$ is true if $\frac{186 - 594 \delta + 54\delta^2}{72\delta - 18\delta^2} > 0$. The denominator is always positive since $\delta$ is at most $0.1$. The numerator is positive since $\delta \leq 0.1$. Indeed, when $\delta = 0.1$ we have $186 - 594(0.1) + 54(0.1)^2 = 127.14$. Thus $M(p) < \frac{3 + 2\delta}{3-2\delta}$ for all $p \in P'$ which implies that for every $p \in Subtree(s)$ it must be that $ M(p) < \frac{3+2\delta}{3 - 2\delta}$. Therefore it must be that for every $p \in Subtree(s)$ that $1 + \frac{2\delta}{3-2\delta} < M(p) < 1 + \frac{4\delta}{3-2\delta}$. This concludes the proof of (1) from Lemma \ref{lem:lowerBound}.

\subparagraph*{Proving $s$ satisfies (2).}
We will first show that $N - D(s) > \frac{7+\delta}{\delta}$, and then use this bound to show that (2) must be true. Recall that $N = x + y = 2\lceil \frac{200}{\delta^2} \rceil + \lceil \frac{600 - 182\delta}{\delta(3-2\delta)} \rceil \geq \frac{1200 - 200 \delta - 182 \delta^2}{3\delta^2 - 2\delta^3}$.
 Next, recall that $D(s) < N' \leq \frac{382}{\delta^2 - \frac{2}{3}\delta^3}$.
 The difference between these two bounds is  $\frac{1200 - 200 \delta - 182 \delta^2}{3\delta^2 - 2\delta^3} - \frac{382}{\delta^2 - \frac{2}{3}\delta^3} = \frac{54 - 200\delta - 182\delta^2}{3\delta^2 - 2\delta ^3}$. 
Clearly we can see that $\frac{54 - 200\delta - 182\delta^2}{3\delta^2 - 2\delta ^3} > \frac{7 + \delta}{\delta}$ is true. It follows that $N - D(s) > \frac{7+\delta}{\delta}$. 
Let $d = D(s) + \frac{7+\delta}{\delta}$.
Recall that $s^\uparrow$ extends to at least one point in $\Sigma$ and $s^\rightarrow$ must also extend to at least one point in $\Sigma$. It then follows from $N - D(s) > \frac{7+\delta}{\delta}$ that $s^\uparrow$ extends to at least one point on $d$ and $s^\rightarrow$ must also extend to at least one point on $d$. We will show that this implies that $ConeWidth(Subtree(s),d) > 2 - \frac{2\delta}{3}$.

\begin{figure}
    \centering
    \includegraphics[scale=1]{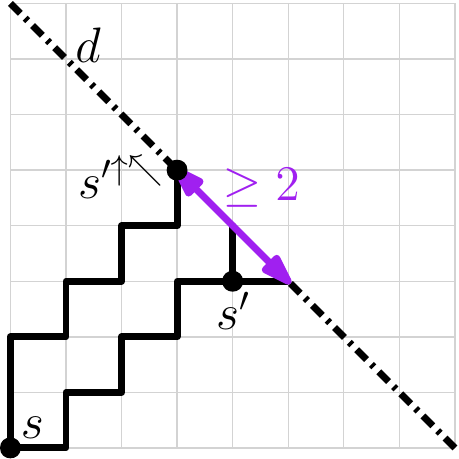}
    \caption{An illustration of the first case.}
    \label{fig:s_prime}
\end{figure}

There are two cases to consider.  The first is there is a split point in the path from $s^\rightarrow$ before $d$, and the second is there is no split point in the path from $s^\rightarrow$ before $d$.
Let us consider the first case. Let $s'$ be a split point in the path from $s^\rightarrow$. We will show that $ConeWidth(Subtree(s),D(s') + 1) \geq 2$.  By assumption $D(s') < d$ which implies that $D(s') + 1 \leq d$. Trivially, $s^\rightarrow \in R(s'^\rightarrow)$ and $s^\rightarrow \in R(s'^\uparrow)$. Furthermore, since $s^\uparrow$ must extend to $d$ it then trivially extends to $D(s') + 1$. We also know from (S3) that any point that $s^\uparrow$ extends to on $D(s') + 1$ must be above $s'^\uparrow$. Notice that the diagonal distance between $s'^\rightarrow$ and $s'^{\uparrow\nwarrow}$ is $2$. It then follows that $ConeWidth(Subtree(s),D(s')+1) \geq 2$ which implies $ConeWidth(Subtree(s),d) \geq 2$. See Figure \ref{fig:s_prime}.

\begin{figure}
    \centering
    \includegraphics[scale=1]{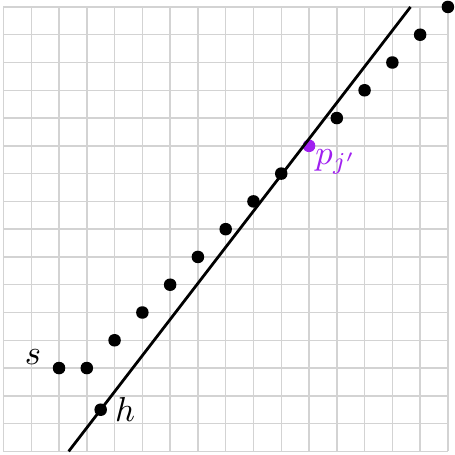}
    \caption{An illustration of the points of the form $p_j$ and the point $h$.}
    \label{fig:p_j_points}
\end{figure}

\begin{figure}
    \centering
    \includegraphics[scale=1]{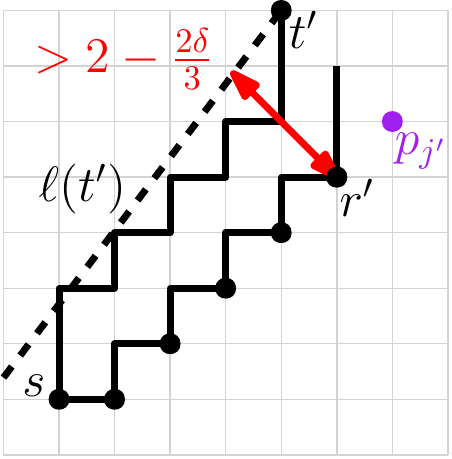}
    \caption{An illustration of how two consecutive movements on the path from $s^\rightarrow$ to $d$ implies that $ConeWidth(Subtree(s),D(r') > 2 - \frac{2\delta}{3}$.}
    \label{fig:two_consecutive_vert_movements}
\end{figure}

In the second case we consider the point $h := (s_x + 1.5, s_y - 1.5)$. Recall that the error is less than $1.5$.  This implies that the bottom edge of $Cone(Subtree(s))$ must intersect $D(s)$ above $h$. See Figure \ref{fig:p_j_points}. 
It follows that all points in $Subtree(s^\rightarrow)$ have slope greater than $M(h)$ otherwise the error is at least $1.5$. Let the points $p_j$ be of the form $p_j := (s^\rightarrow_x + j, s^\rightarrow_y + j)$ for any $j \geq 1$.  If $s^\rightarrow$ extends to a $p_j$ then it must be that the path from $s^\rightarrow$ to diagonal $D(p_j)$ has $j$ vertical and $j$ horizontal movements. We would like to know the maximum $j$ for which $s^\rightarrow$ can extend to $p_j$ such that $M(p_j) > M(h)$.  Notice that $s^\rightarrow = (h_x - 0.5, h_y + 1.5)$ which implies $M(p_j) = \frac{h_y + 1.5 + j}{h_x - 0.5 + j}$.  We will now solve for which $j$ satisfies $\frac{h_y + 1.5 + j}{h_x - 0.5 + j} \leq M(h)$. Notice this is equivalent as asking for what $j$ satisfies $\frac{h_y + 1.5 + j}{h_x - 0.5 + j} \leq \frac{h_y}{h_x}$. We then have $h_yh_x + h_x(1.5 + j) \leq h_yh_x + h_y(j-0.5)  \implies \frac{1.5 + j}{j - 0.5} \leq \frac{h_y}{h_x}$. Thus we would like to know for which $j$ satisfies $\frac{1.5 + j}{j - 0.5} \leq M(h)$. 

Given a fixed $M(s)$, we can see that $M(h)$ grows as $D(s)$ becomes larger.  Recall that $M(s) > \frac{3}{3-2\delta}$ and $D(s) \geq N''$.
 It follows that $M(h)$ is at least the slope of the point, $q$, that is $1.5$ diagonal distance below  $Inter(\ell(3-2\delta,3), N'')$. From Observation \ref{obs:get_coordinates_slope_of_line} we then have $q_x = \frac{3-2\delta}{6-2\delta}N'' + 1.5 \leq \frac{189 - 117\delta}{6\delta^2 - 2\delta^3}$ and $q_y = N'' - \frac{189 - 117\delta}{6\delta^2 - 2\delta^3} = \frac{189 - 9\delta}{6\delta^2 - 2\delta^3}$. We can then conclude that $M(h) \geq \frac{189 - 9\delta}{189 - 117\delta}$.
 
 It follows that if we find for which $j$ is $\frac{1.5 + j}{j - 0.5} \leq \frac{189 - 9\delta}{189 - 117\delta}$ true, it will also be true for $\frac{1.5 + j}{j - 0.5} \leq  M(h)$. We then have
 $\frac{1.5 + j}{j - 0.5} \leq \frac{189 - 9\delta}{189 - 117\delta}$ is true for all $j \geq \frac{378 - 180\delta}{108\delta}$. Therefore the maximum $j$ for which $s^\rightarrow$ can extend to $p_j$ is $j = \lceil \frac{378 - 180\delta}{108\delta}\rceil - 1$. Let $j' := \lceil \frac{378 - 180\delta}{108\delta}\rceil$, that is the maximum $j$ plus $1$. Let us then consider what point $s^\rightarrow$ must extend to on $D(p_{j'})$. Because $M(p_{j' }) < M(h)$  it must be that $s^\rightarrow$ extends to a point who's slope is at minimum $M(p_{j'}^\nwarrow)$. Recall that $p_{j'}$ is $j'$ vertical and $j'$ horizontal movements away from $s^\rightarrow$. It then follows that $p_{j'}^\nwarrow$ is $j'+1$ vertical and $j'-1$ horizontal movements away from $s^\rightarrow$. Therefore the path from $s^\rightarrow$ to $D(p_{j'})$ must have at least two more vertical than horizontal movements. It then follows from pigeonhole principle that the path from $s^\rightarrow$ to $D(p_{j'})$ must contain at least two consecutive vertical movements. See Figure \ref{fig:two_consecutive_vert_movements}.

Notice that $D(p_{j'}) = D(s) + 2j' + 1 = D(s) + 2\cdot \lceil \frac{378 - 180\delta}{108\delta} \rceil + 1 < D(s) + 2 \cdot (\frac{378 - 180\delta}{108\delta} + 1 ) + 1 \leq D(s) + \frac{7}{\delta}$. 
Recall that $d = D(s) + \frac{7+ \delta}{\delta}$. Clearly $D(p_{j'}) < d$. Therefore the path from $s^\rightarrow$ to $d$ must contain at least two consecutive vertical movements. We will now show these consecutive vertical movements imply $ConeWidth(Subtree(s),d) > 2 - \frac{2\delta}{3}$.
Let $r'$ be a point such that $D(r') \leq d-2$ and $s^\rightarrow$ extends to $r'$, $r'^\uparrow$, and $r'^{\uparrow\uparrow}$. Let $t' := r'^{\uparrow\uparrow\nwarrow}$. Recall that $s^\uparrow$ must extend to $N$ which implies $s^\uparrow$ also extends to  $D(t')$. We also know from (S3) that any point that $s^\uparrow$ extends to on $D(t')$ must be above $r'^{\uparrow\uparrow}$.  This implies that $ConeWidth(Subtree(s),D(r'))$ is at least the diagonal distance between $r'$ and $Inter(\ell(t'), D(r'))$. 
As we showed in our proof of (1) from Lemma \ref{lem:lowerBound} all points $s$ extends to must have slope less than $\frac{3 + 2\delta}{3 - 2\delta}$ which implies that $M(t') < \frac{3 + 2\delta}{3 - 2\delta}$, otherwise there is no point $s^\uparrow$ can extend to on $D(t')$. Let $z$ be the point that is $2-\frac{2\delta}{3}$ above $r'$.
Let $\overline{t'}$ be the Euclidean line that intersects both $t'$ and $z$. 
If $Inter(\ell(t'), D(r'))$ is above $Inter(\overline{t'}, D(r'))$ then $ConeWidth(Subtree(s),D(r')) > 2 - \frac{2\delta}{3}$. 
We then have $M(\overline{t'}) = \frac{t'_y - z_y }{t'_x - z_x}  =  \frac{t'_y - (t'_y - (1 + \frac{2\delta}{3}))}{t'_x - (t'_x -(1-\frac{2\delta}{3}))} = \frac{3 + 2\delta}{3 - 2\delta}$. It then follows that $M(t') < M(\overline{t'})$ which implies that $Inter(\ell(t'), D(r'))$ is above $Inter(\overline{t'}, D(r'))$. Thus $ConeWidth(Subtree(s),D(r')) > 2 - \frac{2\delta}{3}$ which implies that $ConeWidth(Subtree(s),d) > 2 - \frac{2\delta}{3}$. This concludes the proof of (2) from Lemma \ref{lem:lowerBound}, which then concludes the entire proof of Lemma \ref{lem:lowerBound}.

\end{proof}

Using the existence of $s$ from Lemma \ref{lem:lowerBound}, we can prove Theorem \ref{thm:lowerBound}.  We show that given such an $s$, there there must be a point $p$ as described in the proof sketch.  Then it is the case that $p$ is on a segment whose error is at least $1.5 - \delta > 1.5 - \epsilon$.

\begin{proof}
Let $s$ be a split point as described in Lemma \ref{lem:lowerBound}. We will show that there must be a point $p$ such that $p \in R(s)$ and the parent of $p$ will be more than $1.5 - \delta$ from an edge of $Cone(Subtree(s))$.

\begin{figure}
    \centering
    \includegraphics{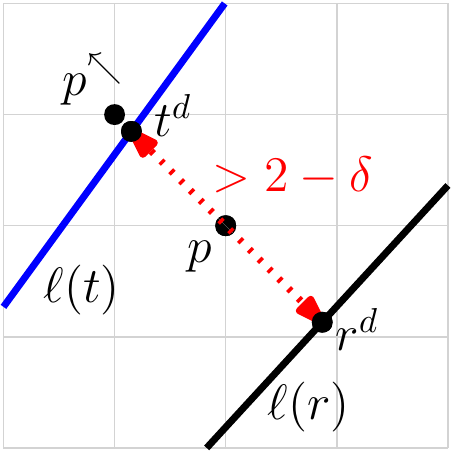}
    \caption{$d \in \mathcal{D}$ that satisfies the properties below. The red dotted line represents $ConeWidth(Subtree(s),d) > 2-\delta$, and $t^d$ is at most $\frac{2}{3}\delta$ below $p^\nwarrow$.}
    \label{fig:diagonal_d_in_D}
\end{figure}

Let $\ell(t)$ be the left edge of $Cone(Subtree(s))$ and $\ell(r)$ be the right edge of $Cone(Subtree(s))$. For a diagonal $d$ let $t^d$ be $Inter(\ell(t),d)$, and let $r^d$ be $Inter(\ell(r),d)$.
Let $\mathcal{D}$ be the set of all diagonals such that $d \in \mathcal{D}$
if $d \leq D(s)$, $0 \leq t_x^d - \lfloor t_x^d \rfloor \leq \frac{2}{3}\delta$, and $ConeWidth(Subtree(s), d) > 2 - \delta$. That is to say $d\in \mathcal{D}$ if $d \leq D(s)$, $t^d$ is at most $\frac{2}{3}\delta$ below a grid point on $d$, and the cone width of $s$ on $d$ is more than $2-\delta$ (Figure \ref{fig:diagonal_d_in_D}).  Let $d \in \mathcal{D}$ be such a diagonal.
Let $p := (\lceil t^d_x  \rceil, \lfloor t^d_y \rfloor )$. Notice that all points above $p$ on $D(p)$ are more than $1.5 - \delta$ away from $r^d$ and all points below $p$ on $D(p)$ are more than $1.5 - \delta$ away from $t^d$. It then follows that $p \in R(s)$ and $r^d_x - p_x < 1.5 - \delta$, otherwise the error is more than $1.5 - \delta$. 

There are two cases to consider. The first is if $p^\leftarrow$ is the parent of $p$, and the second is if $p^\downarrow$ is the parent of $p$.

\begin{figure}
    \centering
    \includegraphics{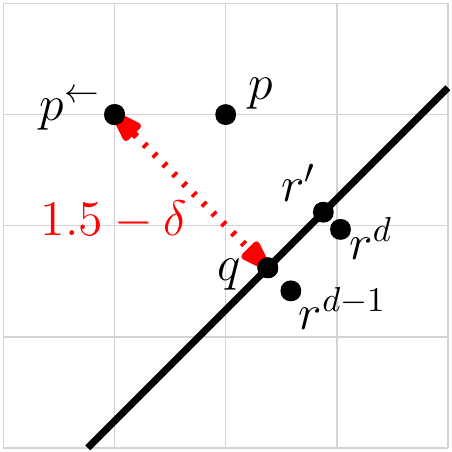}
    \caption{The case when $p^\leftarrow$ is the parent of $p$. The red dotted line represents the distance between $p^\leftarrow$ and $q$.}
    \label{fig:p_left_arrow_parent}
\end{figure}

Let us first consider when $p^\leftarrow$ is the parent of $p$. We will show that $r^{d-1}_x - p^\leftarrow_x > 1.5 - \delta$. Let $r'$ be the point that is exactly $\delta$ above $p^\searrow$, that is $r' = (p_x + 1 - \delta, p_y - 1 + \delta)$. Note that $r^d$ will always be strictly below $r'$.
Let $q$ be the point that is $1.5-\delta$ below $p^\leftarrow$, that is $q := (p_x + 0.5 - \delta, p_y - 1.5 + \delta)$ (Figure \ref{fig:p_left_arrow_parent}). Note that $r^{d-1}$ must be above or on $q$ in order for the error to be at most $1.5 - \delta$. The slope of the Euclidean line that intersects both $q$ and $r'$ is $\frac{r'_y - q_y}{r'_x - q_x} = \frac{p_y - 1 + \delta - (p_y - 1.5 + \delta)}{p_x + 1 - \delta - (p_x + 0.5 - \delta)} = 1$. This implies that the slope of $\ell(r)$ would need to be less than $1$ in order for $r^{d-1}$ to be above or on $q$. However,  
Lemma \ref{lem:lowerBound} states that $M(\ell(r)) > \frac{3}{3-2\delta}> 1$. It follows that $r^{d-1}$ is below $q$ and therefore $r^{d-1}_x - p^\leftarrow_x > 1.5 - \delta$.

\begin{figure}
    \centering
    \includegraphics{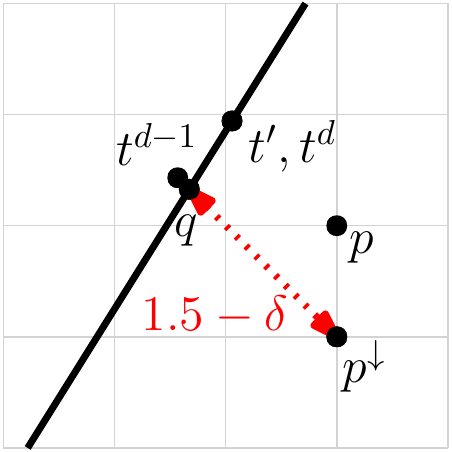}
    \caption{The case when $p^\downarrow$ is the parent of $p$. The red dotted line represents the distance between $p^\downarrow$ and $q$.}
    \label{fig:p_down_arrow_parent}
\end{figure}

Next we consider the case when $p^\downarrow$ is the parent of $p$. We will show that $p^\downarrow_x - t^{d-1}_x > 1.5 - \delta$. Let $t'$ be the point that is $\frac{2}{3}\delta$ below $p^\nwarrow$. Note that $t^d$ will always be strictly above $t'$. Let $q$ be the point that is $1.5 - \delta$ above $p^\downarrow$, that is $q :=(p_x - 1.5 + \delta, p_y + 0.5 - \delta)$ (Figure \ref{fig:p_down_arrow_parent}). Note that $t^{d-1}$ must be below or on $q$ in order for the error to be at most $1.5 - \delta$. The slope of the Euclidean line that intersects both $q$ and $t'$ is $\frac{p_y + 1 - \frac{2}{3}\delta - (p_y + 0.5 - \delta)}{p_x - 1 + \frac{2}{3}\delta - (p_x - 1.5 + \delta)} = \frac{3 + 2\delta}{3 - 2\delta}$. This implies the slope of $\ell(t)$ would need to be more than $\frac{3+2\delta}{3-2\delta}$ in order for $t^{d-1}$ to be below or on $q$. However, Lemma \ref{lem:lowerBound} states that $M(\ell(t)) < \frac{3 + 2\delta}{3 - 2\delta}$. It follows that $t^{d-1}$ is above $q$ and therefore $p^\downarrow_x - t^{d-1}_x > 1.5 - \delta$.

Therefore if we show that at least one $d \in \mathcal{D}$ exists then we are done.  
We will do so by first showing there exists a diagonal $d^*$
that is the smallest diagonal for which the cone width of $Subtree(s)$ is greater than $2-\delta$.
We will then show that there must be a diagonal greater than $d^*$ and less than or equal to $D(s)$ for which $\ell(t)$ crosses at most $\frac{2}{3}\delta$ below a grid point.

Let us begin by bounding for which diagonals the cone width of $Subtree(s)$ must be greater than $2 - \delta$. Recall Lemma \ref{lem:lowerBound} states that $ConeWidth(Subtree(s),D(s)+ \frac{7+\delta}{\delta}) > 2-\frac{2}{3}\delta$. 
Trivially, because $\ell(t)$ and $\ell(r)$ are Euclidean lines that intersect the origin the diagonal distance between them is directly proportional to the diagonal in which we are measuring the distance. It then follows that $ConeWidth(Subtree(s),D(s) - d') = (\frac{D(s) - d'}{D(s) + \frac{7 + \delta}{\delta}}) \cdot (ConeWidth(Subtree(s), D(s) + \frac{7 + \delta}{\delta})) > (\frac{D(s) - d'}{D(s) + \frac{7 + \delta}{\delta}}) \cdot (2 - \frac{2\delta}{3})$ for some diagonal $D(s) - d'$. Let us now solve for a bound on $d'$ for when $ConeWidth(Subtree(s), D(s) - d') > 2- \delta$. We have $(\frac{D(s) - d'}{D(s) + \frac{7 + \delta}{\delta}}) \cdot (2 - \frac{2}{3}\delta) > 2 - \delta \implies d' < \frac{\delta^2 D(s) + 15\delta + 3 \delta^2 - 42}{6\delta - 2\delta^2}$. Let $\hat{d} := \frac{\delta^2 D(s) + 15\delta + 3 \delta^2 - 42}{6\delta - 2\delta^2} - 1 = \frac{\delta^2 D(s) + 9\delta + 5 \delta^2 - 42}{6\delta - 2\delta^2} $. Notice that the denominator of $\hat{d}$ is positive since $\delta \leq 0.1$ by definition of $s$,
and the numerator of $\hat{d}$ is positive because we know from Lemma \ref{lem:lowerBound} that $D(s) \geq \frac{63}{\delta^2}$, indeed the numerator is at least  $\delta^2 \frac{63}{\delta^2} + 9\delta + 5 \delta^2 - 42 = 21 + 9\delta + 5 \delta^2$. This implies that $D(s) - \hat{d} < D(s)$, and that $d^* = \lceil D(s) - \frac{\delta^2 D(s) + 15\delta + 3 \delta^2 - 42}{6\delta - 2\delta^2} \rceil < D(s)$.  This implies that $ConeWidth(Subtree(s), D(s)) > 2- \delta$.

\begin{figure}
    \centering
    \includegraphics{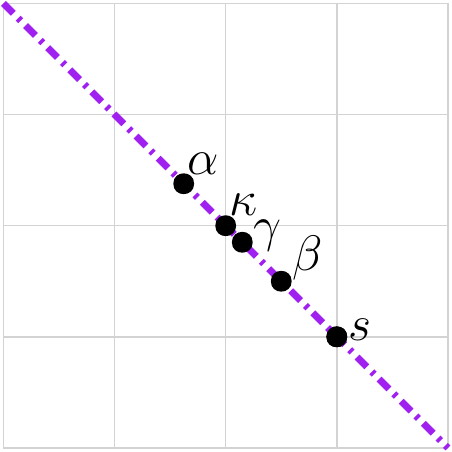}

    \caption{The cases for where $\ell(t)$ crosses $D(s)$ as described above. $D(s)$ is represented by the purple dashed dotted line. }
    
    \label{fig:p_parent_cases}
\end{figure}

We will now show that $\ell(t)$ crosses a diagonal between $[d^*,D(s)]$ at most $\frac{2}{3}\delta$ below a grid point. 
We begin by bounding where $\ell(t)$ can cross $D(s)$.
Let $\alpha$ be the point that is $1.5 - \delta$ above $s$, let $\kappa := s^\nwarrow$, let $\gamma$ be the point that is $\frac{2}{3}\delta$ below $\kappa$, and let $\beta$ be the point that is $0.5$ above $s$ (Figure \ref{fig:p_parent_cases}).
Trivially, $t^{D(s)}$ must be below or on $\alpha$, otherwise $t^{D(s)}$ is more than $1.5 -\delta$ above $s$. It follows from $ConeWidth(Subtree(s), D(s)) > 2 - \delta$ that $t^{D(s)}$ is above or on $\beta$, otherwise $r^{D(s)}$ is more than $1.5 - \delta$ below $s$.  Therefore it must be that $\alpha_x \leq  t_x^{D(s)} \leq \beta_x$. 
There are three cases to consider. The first is that $\kappa_x \leq t_x^{D(s)} \leq \gamma_x$. Notice for this case that $t^{D(s)}$ is at most $\frac{2}{3}\delta$ below the grid point $\kappa$.  Since $d^* < D(s)$
it follows that $D(s) \in \mathcal{D}$ and $p = s$ for this case and we are done. 
The second case is that $\gamma_x < t_x^{D(s)}  \leq \beta_x$, and the third case is that $\alpha_x \leq  t_x^{D(s)} < \kappa_x$. Recall by definition of $s$ that  $\delta \leq 0.1$ which implies $\gamma_x < \beta_x$ for the second case and $\alpha_x < \kappa_x$ for the third case.

We now analyze the second case when $\gamma_x < t_x^{D(s)}  \leq \beta_x$. The third case when $\alpha_x \leq  t_x^{D(s)} < \kappa_x$ is analyzed the same as the second and we make note of any differences between the two cases.

Let the grid points $q_j$ be of the form $q_j := (s_x^\nwarrow + 1 - j, s^\nwarrow_y - j)$ for $j \geq 0$ ($q_j := (s_x^\nwarrow - j, s^\nwarrow_y - j)$ for $j \geq 0$ for the third case). 
Let $Q$ be the set of all $q_j$ points and let $\ell(Q)$ be the Euclidean line that intersects all points in $Q$ (Figure \ref{fig:q_j_points}). For any diagonal $d$ let $Q^d$ be the point where $\ell(Q)$ and $d$ intersect. Let $v$ be the intersection of $\ell(t)$ and $\ell(Q)$. Notice that $D(v) \leq D(s)$ since $t_x^{D(s)} \leq \beta_x$, $Q^{D(s)}  = \beta$, $M(\ell(t)) > 1$, and $M(\ell(Q)) = 1$ ($D(v) < D(s)$ since $t^{D(s)}_x < s^\nwarrow_x$ and $Q^{D(s)} = s^\nwarrow$ for the third case). 
We will show that a $q^* \in Q$ exists such that $0 \leq t^{D(q^*)}_x - q^*_x \leq \frac{2}{3}\delta$ and $D(q^*) \in \mathcal{D}$.

\begin{figure}[ht]
\centering
\begin{tabular}{c@{\hspace{0.1\linewidth}}c}

\includegraphics[scale=1]{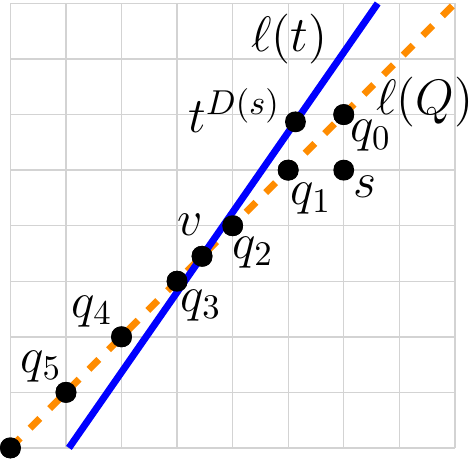}&
\includegraphics[scale=1]{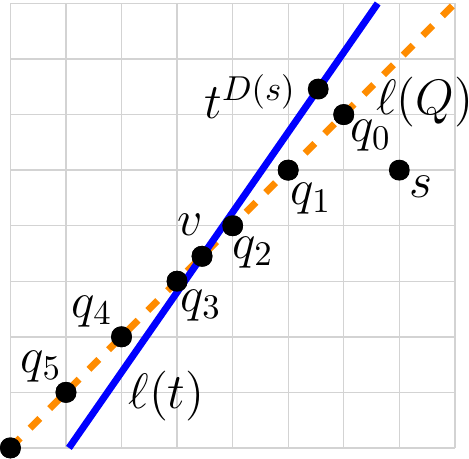}

\\

 (a) & (b)
\end{tabular}
 \caption{ (a) Second case illustration. (b) Third case illustration. 
 }
 \label{fig:q_j_points}
\end{figure}

We will do this by first showing that there must be a $q \in Q$ such that $0 \leq t^{D(q)}_x - q_x \leq \frac{2}{3}\delta$ and $D(v) - D(q) < 2$.

Let $v' := (\lceil v_x \rceil, \lceil v_y \rceil)$ and $v'' := v'^\swarrow$ (Figure \ref{fig:v_prime_and_w}). Notice that $v', v'' \in Q$. There are two cases. Let $\mathcal{A}$ denote the first case for when $v = v'$, and let $\mathcal{B}$ denote the second case for when $v \neq v'$. Trivially $q = v'$ for case $\mathcal{A}$. For case $\mathcal{B}$ let the point $w$ be the point $\frac{2}{3}\delta$ below $v''$. Notice that $w = (v'_x - 1 + \frac{2}{3}\delta, v'_y - 1 - \frac{2}{3}\delta)$. The slope of the line that intersects both $v'$ and $w$ is $\frac{v'_y - w_y}{v'_x - w_x} = \frac{v'_y - (v'_y - 1 - \frac{2}{3}\delta)}{v'_x - (v'_x - 1 + \frac{2}{3}\delta)} = \frac{3 + 2\delta}{3 - 2\delta}$. From Lemma \ref{lem:lowerBound} we know that $M(\ell(t)) < \frac{3 + 2\delta}{3- 2\delta}$. It follows that $0 \leq t^{D(v'')}_x - v''_x \leq \frac{2}{3}\delta$.

Trivially, $D(v') - D(v'') = 2$ which implies $D(v) - D(v'') < 2$. Which implies that $q = v''$ in this case. Therefore there must be a $q \in Q$ such that $0 \leq t^{D(q)}_x - q_x \leq \frac{2}{3}\delta$ and $D(v) - D(q) < 2$ for either case $\mathcal{A}$ or $\mathcal{B}$.

\begin{figure}
    \centering
    \includegraphics{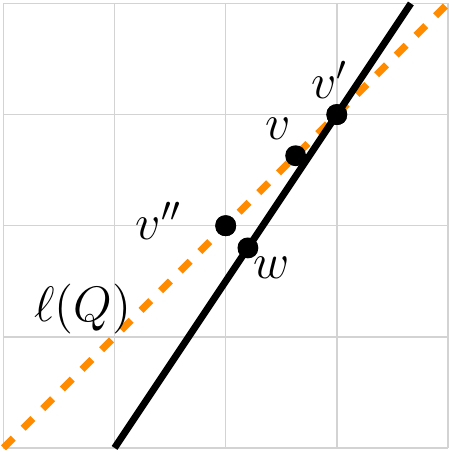}

    \caption{An illustration of the Euclidean line that passes through $v'$ and $w$. $\ell(Q)$ is represented by an orange dashed line.}
    
    \label{fig:v_prime_and_w}
\end{figure}

Next we will show that $d^* < D(q)$ by showing that $d^* < D(v) - 2$. 
Let $q_{j'} := (s^\nwarrow_x + 1 - j', s^\nwarrow_y - j')$ for $j' := \frac{3}{2\delta}$ ($q_{j'} := (s^\nwarrow_x - j', s^\nwarrow_y - j')$ for $j' := \frac{3 - \delta}{2\delta}$ for the third case). Note that while $q_{j'} \in \ell(Q)$, $j'$ might not be an integer which implies $q_{j'}$ might not be in $Q$ (Figure \ref{fig:q_j_prime}). The slope of the line that intersects both $s^\nwarrow$ and $q_{j'}$ is $\frac{s^\nwarrow_y - (s^\nwarrow_y - j')}{s^\nwarrow_x - (s^\nwarrow_x + 1 - j')} = \frac{s^\nwarrow_y - (s^\nwarrow_y - \frac{3}{2\delta})}{s^\nwarrow_x - (s^\nwarrow_x + 1 - \frac{3}{2\delta})} = \frac{3}{3-2\delta}$ (slope of the line the intersects the point $0.5$ above $s^\nwarrow$ and $q_{j'}$ is $\frac{s^\nwarrow_y + 0.5 - (s^\nwarrow_y - \frac{3-\delta}{2\delta})}{s^\nwarrow_x - 0.5 - (s^\nwarrow_x - \frac{3-\delta}{2\delta})} = \frac{3}{3-2\delta}$ for the third case). From Lemma \ref{lem:lowerBound} we know that $M(\ell(t)) > \frac{3}{3-2\delta}$, and recall that $t_x^{D(s)}  > \gamma_x$ ($t_x^{D(s)} \geq \alpha_x$ for the third case). It then follows that $D(q_{j'}) < D(v)$ which implies that $D(q_{j'}) - 2< D(v) - 2$. Note that $s^\nwarrow_x + s^\nwarrow_y = D(s)$. We have $D(q_{j'}) - 2 = s^\nwarrow_y - \frac{3}{2\delta} + s^\nwarrow_x + 1 - \frac{3}{2\delta} - 2 = D(s) - \frac{3 + \delta}{\delta}$ ($D(q_{j'}) - 2 = s^\nwarrow_y - \frac{3 - \delta}{2\delta} + s^\nwarrow_x - \frac{3 - \delta}{2\delta} - 2 = D(s) - \frac{3 + \delta}{\delta}$ for the third case). We then would like to know if $D(q_{j'}) - 2 > d^*$. That is $D(s) - \frac{3 + \delta}{\delta} > \lceil D(s) - \frac{\delta^2 D(s) + 15\delta + 3 \delta^2 - 42}{6\delta - 2\delta^2} \rceil$. Note that $\lceil D(s) - \frac{\delta^2 D(s) + 15\delta + 3 \delta^2 - 42}{6\delta - 2\delta^2} \rceil <  D(s) - \frac{\delta^2 D(s) + 15\delta + 3 \delta^2 - 42}{6\delta - 2\delta^2} +1$. So we can solve if $D(s) - \frac{3 + \delta}{\delta} > D(s) - \frac{\delta^2 D(s) + 15\delta + 3 \delta^2 - 42}{6\delta - 2\delta^2} +1$ is true. We have $D(s) - \frac{3 + \delta}{\delta} - ( D(s) - \frac{\delta^2 D(s) + 15\delta + 3 \delta^2 - 42}{6\delta - 2\delta^2} +1) > 0 \implies \frac{\delta^2D(s) + 9 \delta + 7\delta^2 - 60}{6\delta - 2\delta^2} > 0$. Notice that the denominator is positive since $\delta \leq 0.1$ by definition of $s$, and the numerator is positive because we know from Lemma \ref{lem:lowerBound} that $D(s) > \frac{63}{\delta^2}$, indeed the numerator is at least $\delta^2\frac{63}{\delta^2} + 9\delta + 7\delta^2 - 60 = 63 + 9\delta + 7\delta^2 - 60 = 3 + 9\delta + 7\delta^2$. It follows that $d^* < D(q_{j'}) - 2$ which implies that $d^* < D(v) - 2$. It then follows from $D(v) - D(q) < 2$ that $d^* < D(q)$. We can then conclude that there exists a $q^*$ as described above for all three cases.

\begin{figure}
    \centering
    \includegraphics{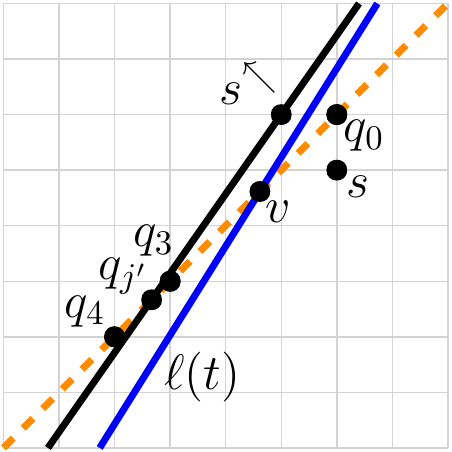}

    \caption{An illustration of $q_{j'}$. $\ell(Q)$ is represented by an orange dashed line.}
    
    \label{fig:q_j_prime}
\end{figure}

It then follows from the definition of $\mathcal{D}$ that there must also be a point $p \in R(s)$ whose parent is more than $1.5 -\delta$ from an edge of $Cone(Subtree(s))$. Therefore the error is greater than $1.5-\delta$.

\end{proof}

\bibliography{bibliography}

\end{document}